\newcommand{\defeq}{\stackrel{\text{def}}{=}}
\newcommand{\vars}{\text{\sf vars}}
\newcommand{\atoms}{\text{\sf atoms}}
\newcommand{\Dom}{\text{\sf Dom}}
\newcommand{\polylog}{\text{\sf polylog}}
\newcounter{magicrownumbers}
\newcommand{\mylist}{\mathsf{list}}
\newtheorem{thm}{Theorem}[section]
\newtheorem{claim}[thm]{Claim}
\theoremstyle{definition}
\newtheoremstyle{cited}%
{.5\baselineskip\@plus.2\baselineskip
    \@minus.2\baselineskip}
{.5\baselineskip\@plus.2\baselineskip
    \@minus.2\baselineskip}
{\itshape}
{\parindent}
{}
{.}
{.5em}
{\textsc{\thmname{#1}} \thmnote{\normalfont#3}}
\theoremstyle{cited}
\newcommand{\out}{\textsf{OUT}}
\definecolor{light-gray}{gray}{0.7.2}
\definecolor{goodgreen}{rgb}{0.1, 0.5, 0.1}
\definecolor{burntorange}{rgb}{0.8, 0.33, 0.0}
\definecolor{lightblue}{RGB}{173, 216, 230} %
\newcommand{\mycomment}[1]{#1}
\newcommand{\nop}[1]{}
\newcommand{\mak}[1]{\mycomment{\todo[inline,color=lightblue]{\textsf{#1} \hfill \textsc{--Mahmoud.}}}}
\newcommand{\dano}[1]{\mycomment{\todo[inline,color=orange]{\textsf{#1} \hfill \textsc{--DanO.}}}}
\newcommand{\change}[1]{#1}
\newcommand{\rai}{RelationalAI}
\newcommand{\pg}{\textsf{PG}\xspace}
\newcommand{\ospg}{\textsf{OSPG}\xspace}
\newcommand{\ov}{\overline}
\begin{document}

\title{Output-Sensitive Evaluation of Regular Path Queries}


\author{Mahmoud Abo Khamis}
\orcid{0000-0003-3894-6494}
\email{mahmoudabo@gmail.com}
\affiliation{
  \institution{\rai{}}
  \department{Query Optimizer}
  \city{Berkeley}
  \state{CA}
  \country{USA}
}

\author{Ahmet Kara}
\orcid{0000-0001-8155-8070}
\email{ahmet.kara@oth-regensburg.de}
\affiliation{
  \institution{OTH Regensburg}
  \department{Department of Computer Science \& Mathematics}
  \city{Regensburg}
  \country{Germany}
}

\author{Dan Olteanu}
\orcid{0000-0002-4682-7068}
\email{olteanu@ifi.uzh.ch}
\affiliation{%
    \institution{University of Zurich}
    \department{Department of Informatics}
    \city{Zurich}
    \country{Switzerland}
}

\author{Dan Suciu}
\orcid{0000-0002-4144-0868}
\email{suciu@cs.washington.edu}
\affiliation{
  \institution{University of Washington}
  \department{Department of Computer Science \& Engineering}
  \city{Seattle}
  \state{WA}
  \country{USA}
}

\renewcommand{\shortauthors}{Mahmoud Abo Khamis, Ahmet Kara, Dan Olteanu, \& Dan Suciu}

\begin{abstract}

We study the classical evaluation problem for regular path queries: Given an edge-labeled graph and a regular path query, compute the set of pairs of vertices that are connected by paths that match the query.

The Product Graph (\pg) is the established evaluation approach for regular path queries. \pg first constructs the product automaton of the data graph and the query and then uses breadth-first search to find the accepting states reachable from each initial state in the product automaton. Its data complexity is $O(|V| \cdot |E|)$, where $V$ and $E$ are the sets of vertices and respectively edges in the data graph. This complexity cannot be improved by combinatorial algorithms.

In this paper, we introduce \ospg, an output-sensitive refinement of \pg, whose data complexity is $O(|E|^{3/2} + \min(\out\cdot\sqrt{|E|}, |V|\cdot|E|))$, where $\out$ is the number of distinct vertex pairs in the query output. \ospg's complexity is at most that of \pg and can be asymptotically smaller for small output and sparse input. The improvement of \ospg over \pg is due to the unnecessary time wasted by \pg in the breadth-first search phase, in case a few output pairs are eventually discovered. For queries without  Kleene star, the complexity of \ospg can be further improved to $O(|E| + |E| \cdot \sqrt{\out})$.

\end{abstract}

\begin{CCSXML}
    <ccs2012>
       <concept>
           <concept_id>10002951.10002952.10002953.10010146</concept_id>
           <concept_desc>Information systems~Graph-based database models</concept_desc>
           <concept_significance>500</concept_significance>
           </concept>
       <concept>
           <concept_id>10003752.10010070.10010111.10011711</concept_id>
           <concept_desc>Theory of computation~Database query processing and optimization (theory)</concept_desc>
           <concept_significance>500</concept_significance>
           </concept>
       <concept>
           <concept_id>10003752.10003809.10003635</concept_id>
           <concept_desc>Theory of computation~Graph algorithms analysis</concept_desc>
           <concept_significance>500</concept_significance>
           </concept>
    </ccs2012>
\end{CCSXML}

\ccsdesc[500]{Information systems~Graph-based database models}
\ccsdesc[500]{Theory of computation~Database query processing and optimization (theory)}
\ccsdesc[500]{Theory of computation~Graph algorithms analysis}

\keywords{graph databases; regular path queries; output-sensitive algorithms; product graph}

\maketitle

\section{Introduction}

Regular path queries are an essential construct in graph query languages~\cite{AnglesABHRV17,DBLP:conf/sigmod/DeutschFGHLLLMM22}, including industry efforts such as SPARQL from W3C~\cite{SPARQL:2013}, Cypher from Neo4J~\cite{Cypher:SIGMOD:2018}, GSQL from TigerGraph~\cite{GSQL:2021}, PGQL from Oracle~\cite{PGQL:2021}, and more recently the standards SQL/PGQ and  GQL~\cite{DBLP:conf/sigmod/DeutschFGHLLLMM22,DBLP:conf/icdt/FrancisGGLMMMPR23}. Such queries have also been investigated to a great extent in academia, e.g.,~\cite{AnglesABHRV17,Baeza13,CalvaneseGLV03,Wood12}, since their introduction as a natural declarative formalism for expressing path matching in graphs~\cite{MendelzonW95}.

Consider a data graph $G=(V,E,\Sigma)$, where $V$ is the set of vertices and $E$ is the set of edges that are labeled with symbols from a finite alphabet $\Sigma$. A query $Q$ is a regular expression over $\Sigma$. The semantics of $Q$ is given by the set of vertex pairs $(v, u)$ such that $u$ can be reached in $G$ from $v$ via a path labeled with a word from the language $L(Q)$. Beyond this classical semantics, the recent literature investigates the variants where the query output is the list of all simple paths (no repeated vertices)~\cite{MendelzonW95, MartensT19, BaganBG20}, trails (no repeated edges)~\cite{MartensT19, MartensNP23}, or shortest paths from $v$ to $u$~\cite{MartensT19}. The accompanying decision problem, which checks whether a vertex pair $(v,u)$ is in the query output, also received  attention~\cite{BarceloLLW12,Baeza13}.

The classical semantics can be realized by a simple and effective algorithm, called Product Graph (\pg)~\cite{MendelzonW95,Baeza13,MartensT18}. This is the workhorse of practical implementations in several existing graph database systems, e.g., MillenniumDB~\cite{DBLP:journals/dint/VrgocRAAABHNRR23}, PathFinder~\cite{DBLP:conf/semweb/CalistoFMRV24}, TigerGraph~\cite{DBLP:conf/sigmod/DeutschFGHLLLMM22}, and RelationalAI. \pg first constructs the product graph (product automaton) $P_{G,Q}$ of the data graph $G = (V,E,\Sigma)$ and the non-deterministic finite automaton $M_Q = (V_Q, E_Q)$ representing the query $Q$. This product graph has vertices $V\times V_Q$ and an edge $((v,q),(u,p))$ iff, for some symbol $\sigma\in\Sigma$, $G$ and $M_Q$  have  $\sigma$-labeled edges $(v,\sigma,u)$ and respectively $(q,\sigma,p)$. \pg then uses graph searching techniques like breadth-first search to find the accepting states reachable from each initial state in the product graph $P_{G,Q}$. The two steps of \pg can be intertwined, so that the product graph need not be materialized.

The data complexity\footnote{In this paper, we consider the data complexity, where the query $Q$ is fixed and has constant size. The combined complexity of the algorithms in this paper also has a linear factor in the size of the non-deterministic finite automaton encoding $Q$.} of \pg is polynomial. It takes $O(|E|)$ time to create the product graph.\footnote{In the literature, the data complexity for constructing the product graph is stated as $O(|V|+|E|)$. We consider this to be $O(|E|)$: In case $|V|>|E|$, the vertices without edges do not contribute to the query output and can be ignored without loss of generality. We can create the list of vertices with at least one edge (as hash map with expected constant update time) in one scan of $E$ by upserting the list with the vertices of each edge.} The graph searching step is triggered for each vertex in $G$, hence requiring an overall $O(|V|\cdot |E|)$ data complexity.
Remarkably, this data complexity cannot be improved by combinatorial algorithms: There is no combinatorial\footnote{This term is not defined precisely. A combinatorial algorithm has a running time with a low constant in the $O$-notation and is feasibly implementable. It also does not rely on Strassen-like matrix multiplication: Using fast matrix multiplication, the data complexity of \pg can in fact be improved to $O(|V|^{\omega})$, where $\omega = 2.37$ is the matrix multiplication exponent~\cite{CaselS23}.} algorithm to compute $Q$ with data complexity $O((|V|\cdot |E|)^{1-\epsilon})$ for any $\epsilon >0$, unless the combinatorial Boolean Matrix Multiplication conjecture fails~\cite{CaselS23}.\footnote{The lower bound in~\cite{CaselS23} has $|V|+|E|$ instead of $|E|$, but an inspection of the proof shows that our formulation with just $|E|$ also holds. The combinatorial Boolean Matrix Multiplication  conjecture is as follows: Given two $n\times n$ Boolean matrices $A,B$, there is no combinatorial algorithm that multiplies $A$ and $B$ in $O(n^{3-\epsilon})$ for any $\epsilon > 0$.}  A further conditional lower bound considers both the input and output sizes~\cite{CaselS23}: There is no (combinatorial and even non-combinatorial) algorithm to compute $Q$ with data complexity $O((|E|+\out)^{1-\epsilon})$ for any $\epsilon > 0$, unless the sparse Boolean Matrix Multiplication conjecture fails.\footnote{As in footnote ($4$), the lower bound in~\cite{CaselS23} uses $|V|+|E|$ instead of $|E|$, but our formulation using just 
$|E|$ holds as well. The sparse Boolean Matrix Multiplication  conjecture is as follows: Given two $n\times n$ Boolean matrices $A,B$ represented by their adjacency lists $\{(i,j)\mid A[i,j]=1\}$ and $\{(i,j)\mid B[i,j]=1\}$ of $1$-entries, there is no algorithm that multiplies $A$ and $B$ in time $O(m)$, where $m$ is the total number of 1-entries in the input and output: $m = |\{(i,j)\mid A[i,j]=1\}| + |\{(i,j)\mid B[i,j]=1\}| + |\{(i,j)\mid (A\times B)[i,j]=1\}|$.}

\medskip

In this article, we introduce \ospg, a refinement of \pg that is {\em output-sensitive} in the sense that its  running time depends on the size of the query output. 
The data complexity of \ospg is $O(|E|^{3/2} + \min(\out\cdot\sqrt{|E|}, |V|\cdot|E|))$, where $\out$ 
is the query output size.
\ospg achieves this complexity {\em without knowing a priori the size of the query output}.
\change{Moreover, the \ospg complexity is never higher than the complexity of \pg in any case, and can be asymptotically lower in many common cases.
In particular, $|E| \leq |V|^2$ in any graph, and $|E|\leq O(1)\cdot |V|^2$ in any {\em edge-labeled} graph.\footnote{\change{There can be multiple edges between the same pair of vertices, but with different labels. However, the number of labels is constant in data complexity because the query size is a constant, and we only need to consider labels that occur in the query.}}
Hence, $|V|$ ranges from $O(|E|^{1/2})$ to $O(|E|)$, so from fully dense to very sparse graphs.
Therefore, the $\pg$ complexity of $O(|V|\cdot|E|)$ ranges from $O(|E|^{3/2})$ to $O(|E|^2)$,
thus always subsuming the term $|E|^{3/2}$ in the $\ospg$ complexity.
The $\ospg$ complexity is strictly lower when the input graph is sparse and the query output is small.
This regime includes common scenarios, where the query is selective and the number of edges is far from  maximum. In this regime, the improvement of \ospg over \pg is due to the unnecessary time wasted by \pg in the breadth-first search phase when only a few output pairs are eventually discovered.}

For queries without Kleene star, the complexity of \ospg can be further improved to $O(|E| + |E| \cdot \sqrt{\out})$. For  small output, \ospg improves \pg by a factor up to the number $|V|$ of vertices in the data graph. The larger the  query output is, the closer the runtimes of \ospg and \pg get.

\nop{
\begin{example}
    \dano{Here comes the example we discussed on Dec 3.}
    \mak{I added the detailed example as Example~\ref{ex:one-path} explaining precisely what the $\ospg$ approach does. We can summarize it here.}
\end{example}
}

\medskip

This article is organized as follows. Section~\ref{sec:prelims}  introduces preliminaries on regular path queries and data graphs with labeled edges. Section~\ref{sec:problem}  introduces the evaluation problem for regular path queries.
Section~\ref{sec:algo} introduces the \ospg algorithm. \ospg has two logical steps: The first step reduces the evaluation for arbitrary queries to the evaluation for the specific query $ab^*c$. The second step introduces a degree-aware adaptive evaluation for the latter query $ab^*c$. The reduction in the first step relies on the product graph construction pioneered by \pg.
Section~\ref{sec:complexity} analyzes the time complexity of \ospg. The main result is Theorem~\ref{thm:complexity}, which states that \ospg computes the output of a query in time $O(|E|^{3/2} + \min(\out\cdot\sqrt{|E|}, |V|\cdot|E|))$.
Section~\ref{sec:discussion} contrasts \ospg and \pg on specific queries, to highlight the regime where \pg spends too much unnecessary time relative to \ospg.
Section~\ref{sec:special} discusses the evaluation for two restricted query classes: queries without Kleene star, and the query representing the transitive closure expressed using Kleene star.
Section~\ref{sec:related} overviews related work on regular path query evaluation and output-sensitive query evaluation algorithms. Section~\ref{sec:conclusion} discusses directions for future work.

\section{Preliminaries}
\label{sec:prelims}

A set $\Sigma$ of symbols is called an {\em alphabet}.
The set of all strings over an alphabet $\Sigma$ is denoted by $\Sigma^*$.
The set of all strings over $\Sigma$ of length $k$ is denoted by $\Sigma^k$.
The empty string is denoted by $\epsilon$.
A {\em language} $L$ over an alphabet $\Sigma$ is a subset of $\Sigma^*$.

\begin{definition}[Edge-labeled graph]
    An edge-labeled graph $G$ is a triple $(V, E, \Sigma)$ where $V$ is a finite set of vertices,
    $\Sigma$ is a finite set of edge labels, and $E \subseteq V \times \Sigma \times V$ is a set of
    labeled directed edges. In particular, every edge $e \in E$ is a triple $(v, \sigma, u)$
    denoting an edge from vertex $v$ to vertex $u$ labeled with $\sigma$.
\end{definition}

\begin{definition}[A path in an edge-labeled graph]
    Given an edge-labeled graph $G = (V, E, \Sigma)$ and two vertices $v, u\in V$, a {\em path} $p$ from $v$ to $u$ of length $k$ for some
    natural number $k$ is a sequence of $k+1$ vertices
    $w_0 \defeq v, w_1, \ldots, w_{k} \defeq u$ and a sequence of $k$ edge labels $\sigma_1, \sigma_2, \ldots, \sigma_k$
    such that for every $i \in [k]$, there is an edge $(w_{i-1}, \sigma_i, w_{i}) \in E$.
    The {\em label} of the path $p$, denoted by $\sigma(p)$, is the string $\sigma_1 \sigma_2 \ldots \sigma_k \in \Sigma^k$.
\end{definition}

We use the standard definition of regular expressions constructed using the labels from $\Sigma$ and the  concatenation, union, and Kleene star operators. 
Given a regular expression $Q$ over an alphabet $\Sigma$, we use $L(Q)$ to denote the language defined by $Q$.
A language is called {\em regular} if and only if it can be defined by a regular expression.

\begin{definition}[Kleene-free regular expression]
    A regular expression $Q$ is called {\em Kleene-free} if it only uses the concatenation and union operators.
    In particular, it does not use the Kleene star operator.
    \label{defn:kleene-free}
\end{definition}

We also use the standard definitions of nondeterministic and deterministic finite automata, abbreviated as NFAs and DFAs.
Given an NFA (or a DFA) $M$, we represent $M$ as an edge-labeled graph $M=(V, E, \Sigma)$ where $V$ is the set of states, $\Sigma$ is the alphabet, and $E$ is the set of transitions.
In particular, each labeled edge $(q, \sigma, p)\in E$ represents a transition from state $q$ to state $p$ on input $\sigma$.
We use $L(M)$ to denote the language recognized by $M$.
It is known that a language is regular if and only if it can be recognized by an NFA,
and that every NFA can be converted into an equivalent DFA.
Therefore, given a regular expression $Q$, there exists an NFA (and DFA) $M_Q$ such that $L(Q) = L(M_Q)$.

\begin{definition}[Product Graph, $G_1 \times G_2$]
    Given two edge-labeled graphs $G_1 = (V_1, E_1, \Sigma)$ and $G_2 = (V_2, E_2, \Sigma)$
    over the same set of labels $\Sigma$, the {\em product graph} $G_1 \times G_2$ is a graph
    $G=(V, E)$ (without edge labels) where $V \defeq V_1 \times V_2$ and $E$ is defined as follows:
    \begin{align*}
        E \defeq \{((v_1, v_2), (u_1, u_2)) \mid
            \exists \sigma \in \Sigma : (v_1, \sigma, u_1) \in E_1 \wedge (v_2, \sigma, u_2) \in E_2\}
    \end{align*}
    \label{defn:product-graph}
\end{definition}
In the literature, given a regular path query $Q$ over an input graph $G=(V, E, \Sigma)$, it is standard to consider the product graph $P_{G,Q}$ of $G$ with an NFA $M_Q=(V_Q, E_Q, \Sigma)$ that recognizes the language $L(Q)$.
The initial states of $P_{G,Q}$ are pairs $(v, q)$ where $v$ is a vertex in $G$ and $q$ is an initial state of $M_Q$.
The accepting states of $P_{G,Q}$ are pairs $(v, q)$ where $v$ is a vertex in $G$ and $q$ is an accepting state of $M_Q$.
The goal is to find paths from initial states to accepting states in the product graph.

\begin{definition}[$Q$-reachability and $Q$-degree of a vertex in an edge-labeled graph]
    Let $\Sigma$ be an alphabet, $G = (V, E, \Sigma)$ an edge-labeled graph, and $Q$ a regular path query over $\Sigma$.
    Given a vertex $v\in V$, a vertex $u \in V$ is said to be {\em $Q$-reachable}
    from $v$ if there is a path from $v$ to $u$ whose label is in the language $L(Q)$.
    The set of vertices that are $Q$-reachable from $v$ is denoted by $N_Q(v)$.
    We define the {\em $Q$-degree} of $v$ in $G$, denoted by $\deg_Q(v)$, to be the number of vertices in $N_Q(v)$:
    \begin{align*}
        \deg_Q(v) \defeq |N_Q(v)|.
    \end{align*}
    \label{defn:tau-reachable}
\end{definition}

\section{The RPQ Evaluation Problem}
\label{sec:problem}

In this section, we formally define the RPQ problem that we study in this paper.

\begin{definition}
    A Regular Path Query, RPQ or query for short, is a regular expression $Q$ over some alphabet $\Sigma$.
    The RPQ evaluation problem for $Q$ has the following input and output:
    \begin{itemize}
        \item {\bf Input:} An edge-labeled graph $G = (V, E, \Sigma)$, i.e.~where the edge labels
        come from the alphabet $\Sigma$.
        \item {\bf Output:} The set of all pairs of vertices $(v, u) \in V \times V$
        where $u$ is $Q$-reachable from $v$ (Definition~\ref{defn:tau-reachable}).
    \end{itemize}
    \label{defn:rpq}
\end{definition}

In particular, note that we do {\em not} aim to output the paths from $v$ to $u$ themselves, but rather the pairs $(v,u)$ of vertices that are connected by those paths that match the regular expression $Q$.
The actual number of paths could be exponential in the size of the input graph (and even infinite for graphs with cycles), and hence outputting all of them would be infeasible.

\paragraph{Complexity measures}
We use data complexity. In particular, we assume the query $Q$ is fixed.
Hence, a non-deterministic finite automaton $M_Q=(V_Q, E_Q, \Sigma)$ recognizing the regular expression $Q$ can be assumed to have constant size.
Moreover, the size of the alphabet $\Sigma$ can also be assumed to be constant since
edges with labels $\sigma \in \Sigma$ that do not appear in $Q$ can be dropped from the input graph $G$.
We measure the complexity of the RPQ problem in terms of the following parameters:
\begin{itemize}
    \item {\bf The number of vertices}, $|V|$.
    \item {\bf The number of edges}, $|E|$.
    \item {\bf The output size}, $\out$, which is the number of pairs of vertices $(v, u) \in V \times V$ where $u$ is $Q$-reachable from $v$.
\end{itemize}
In particular, note that the output size $\out$ is {\em not} the number of paths that match the regular expression. Instead, it is just the number of different pairs of endpoints
of those paths.
\section{The Output-Sensitive Product Graph Algorithm}
\label{sec:algo}

We describe below our output-sensitive algorithm for solving RPQs, which we call \ospg.
Given an edge-labeled graph $G = (V, E, \Sigma)$ where the labels come from an alphabet $\Sigma$ and
an RPQ $Q$, which is a regular expression over  $\Sigma$,
our algorithm consists of two main steps:
\begin{itemize}
    \item Step 1: Reduce the RPQ $Q$ over the graph $G = (V, E, \Sigma)$ into the RPQ $a b^* c$ over another graph $G'=(V', E', \Sigma' \defeq \{a, b, c\})$
    such that $|V'|=O(|V|)$ and $|E'|=O(|E|)$.
    \item Step 2: Solve the RPQ $a b^* c$ over the graph $G'$ using the output-sensitive product graph algorithm.
\end{itemize}
Step 1 above is directly based on the construction of the product graph, which is also the first step of the traditional \pg algorithm. Step 2 replaces the graph-searching step of the \pg algorithm with a more efficient algorithm that is output-sensitive.
We explain each step below in more detail.

\subsection{Reduction from any RPQ $Q$ to the RPQ $a b^* c$}
\label{subsec:algo:reduction}
Our reduction from any RPQ $Q$ to the RPQ $a b^* c$ essentially shows that $a b^* c$
is the {\em hardest} RPQ.
Let $M_Q=(V_Q, E_Q, \Sigma)$ be an NFA for $Q$, represented as an edge-labeled graph (cf.\@ Section~\ref{sec:prelims}).
First, we construct the product graph $\ov G \defeq G \times M_Q$.
Let $\ov V$ and $\ov E$ be the set of vertices and edges of $\ov G$, respectively.
Then, we construct an edge-labeled graph $G'=(V',E', \Sigma'\defeq \{a, b, c\})$ as follows:
The set of vertices $V'$ is the same as the set of vertices $\ov V$.
The set of edges $E'$ contains the same edges in $\ov E$ but with the additional label $``b"$.
Moreover, $E'$ contains a self-loop labeled $``a"$ for each vertex corresponding to an initial state
in the NFA $M_Q$, and a self-loop labeled $``c"$ for each vertex corresponding to an accepting state in $M_Q$.
Formally:
\begin{align*}
    V' \quad\defeq\quad & \ov V,\\
    E' \quad\defeq\quad & \{((v, q),``b", (u, p)) \mid ((v, q), (u, p)) \in \ov E\}\quad\cup\\
       &\{((v, q), ``a", (v, q)) \mid \text{$v \in V$ and $q$ is an initial state in the NFA $M_Q$}\}\quad\cup\\
    &\{((v, q), ``c", (v, q)) \mid \text{$v \in V$ and $q$ is an accepting state in the NFA $M_Q$}\}
\end{align*}

Based on the above construction, it is straightforward to see that
for any pair of vertices $(v, u) \in V \times V$, there is a path from $v$ to $u$ in $G$ that matches
the regular expression $Q$ if and only if there exist two states $q, p$ in the NFA $M_Q$
such that there is a path from $(v, q)$ to $(u, p)$ in $G'$
that matches the regular expression $a b^* c$.
Hence, we can compute the output pairs $(v, u)$ to the RPQ $Q$ over $G$
by listing the output pairs $((v, q), (u, p))$ to the RPQ $a b^* c$ over $G'$
and projecting $q$ and $p$ away. Moreover, the number of output pairs
$((v, q), (u, p))$ is at most an $O(1)$-factor larger than the number of output pairs $(v, u)$.
(Recall that in data complexity, the query $Q$ is fixed, hence the number of states of the NFA $M_Q$ is a constant.)

\begin{example}
    \label{ex:reduction}
Suppose we have an edge-labeled graph $G=(V, E, \Sigma=\{d, e, f, g\})$
and consider the RPQ $Q = d^*(e\cdot f + g)^*$.
We demonstrate the above reduction from $Q$ to the RPQ $ab^*c$ over another edge-labeled graph $G'=(V', E', \Sigma'=\{a, b, c\})$.
The regular expression $Q$ can be translated into the DFA $M_Q=(V_Q, E_Q, \Sigma)$ that is depicted in Figure~\ref{fig:ex:reduction}. This DFA has states $\{q_0, q_1, q_2\}$, an initial state $q_0$, accepting states $\{q_0, q_2\}$, and the following transitions, given as labeled edges:
\begin{align*}
    E_Q =
    \{(q_0, ``d", q_0),
    (q_0, ``e", q_1),
    (q_0, ``g", q_2),
    (q_1, ``f", q_2),
    (q_2, ``e", q_1),
    (q_2, ``g", q_2)\}
\end{align*}
The product graph $\ov G$ has the set of vertices $\ov V = V \times \{q_0, q_1, q_2\}$
and the following set of edges:
\begin{align*}
    \ov E \quad=\quad
    &\{((v, q_0), (u, q_0))\mid (v, ``d", u) \in E\} &\cup&\quad
    \{((v, q_0), (u, q_1))\mid (v, ``e", u) \in E\} &\cup\\
    &\{((v, q_0), (u, q_2))\mid (v, ``g", u) \in E\} &\cup&\quad
    \{((v, q_1), (u, q_2))\mid (v, ``f", u) \in E\} &\cup\\
    &\{((v, q_2), (u, q_1))\mid (v, ``e", u) \in E\} &\cup&\quad
    \{((v, q_2), (u, q_2))\mid (v, ``g", u) \in E\}
\end{align*}
Finally, the new edge-labeled graph $G'=(V', E', \Sigma')$ has the set of vertices $V' = \ov V$ and the following set of labeled edges:
\begin{align*}
    E' =
    &\{((v, q_0), ``b", (u, q_0))\mid (v, ``d", u) \in E\} &\cup&\quad
    \{((v, q_0), ``b", (u, q_1))\mid (v, ``e", u) \in E\} &\cup\\
    &\{((v, q_0), ``b", (u, q_2))\mid (v, ``g", u) \in E\} &\cup&\quad
    \{((v, q_1), ``b", (u, q_2))\mid (v, ``f", u) \in E\} &\cup\\
    &\{((v, q_2), ``b", (u, q_1))\mid (v, ``e", u) \in E\} &\cup&\quad
    \{((v, q_2), ``b", (u, q_2))\mid (v, ``g", u) \in E\} &\cup\\
    &\{((v, q_0), ``a", (v, q_0))\mid v \in V\} &\cup&\quad
    \{((v, q_0), ``c", (v, q_0))\mid v \in V\} &\cup\\
    &\{((v, q_2), ``c", (v, q_2))\mid v \in V\}
\end{align*}
And now we can verify that there is a path in $G'$ from a vertex $(v, q)$ to another $(u, p)$
that matches $a b^* c$ if and only if there is a path in $G$ from $v$ to $u$ that matches $d^*(e\cdot f + g)^*$.
\end{example}

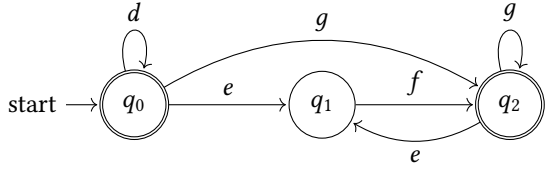
\begin{figure}
\begin{tikzpicture}[shorten >=1pt,node distance=2.5cm,on grid,auto]
    \node[state, initial, accepting] (q0) {$q_0$};
    \node[state, right=of q0] (q1) {$q_1$};
    \node[state, accepting, right=of q1] (q2) {$q_2$};
  
    \path[->]
        (q0) edge[loop above] node {$d$} (q0)
             edge node {$e$} (q1)
        (q0) edge[bend left] node {$g$} (q2)
        (q1) edge node {$f$} (q2)
        (q2) edge[bend left] node {$e$} (q1)
        (q2) edge[loop above] node {$g$} (q2);
\end{tikzpicture}
\caption{The DFA for the RPQ $Q = d^*(e\cdot f + g)^*$ from Example~\ref{ex:reduction}.}
\Description{The DFA for the RPQ $Q = d^*(e\cdot f + g)^*$ from Example~\ref{ex:reduction}.}
\label{fig:ex:reduction}
\end{figure}

\subsection{Solving the RPQ $a b^* c$}
\label{subsec:algo:ab*c}
Our output-sensitive algorithm for the RPQ $a b^* c$ is given in Algorithm~\ref{algo:ab*c}.
The input is an edge-labeled graph $G=(V, E, \Sigma=\{a, b, c\})$ and the output is a list of pairs of vertices $(v, u)$
where $u$ is $(ab^*c)$-reachable from $v$.

The first step of the algorithm is to compute a binary relation $R(X, Y)$ that stores pairs of vertices
$(x, y)$ where $y$ is $(b^*c)$-reachable from $x$, as defined by Eq.~\eqref{eq:R:base} and~\eqref{eq:R:recursive} in the algorithm.
However, for every vertex $x$, we only store in $R$ at most $\Delta\defeq \sqrt{|E|}+1$ distinct $y$ values.
If $x$ has more than $\Delta$ distinct $y$ values that are $(b^*c)$-reachable from $x$, we can store any $\Delta$ of them and ignore the rest. By construction, the size of $R$ cannot exceed $|V|\cdot \Delta$.
We will see in Section~\ref{sec:complexity} that the computation of $R$ takes time $O(|E|^{3/2})$.

Given a vertex $x$, we define $\deg_R(x)$ to be the number of distinct $y$ values satisfying $R(x, y)$.
In the second step of the algorithm, we partition vertices $x \in V$ based on $\deg_R(x)$.
In particular, a vertex $x$ is called {\em light} if $\deg_R(x) \leq \sqrt{|E|}$ and {\em heavy} otherwise.
Note that by definition of $R$ and Definition~\ref{defn:tau-reachable},
this is equivalent to saying that a vertex $x$ is light if its $(b^*c)$-degree, $\deg_{b^*c}(x)$, is at most $\sqrt{|E|}$
and heavy otherwise.
We define $R_\ell$ to be the set of pairs of vertices $(x, y) \in V\times V$ where
$x$ is light and $y$ is $(b^* c)$-reachable from $x$, as shown in Eq.~\eqref{eq:R_ell}.
We also define $R_h$ in Eq.~\eqref{eq:R_h} to be the set of heavy vertices $x \in V$.
Given $R$, both $R_\ell$ and $R_h$ are straightforward to compute.

Finally, the algorithm evaluates the Datalog program given by Eq.~\eqref{eq:Q_ell}--\eqref{eq:Q_h}, which defines two binary relations $Q_\ell$ and $Q_h$, and the algorithm returns their union as the final output to the RPQ.
We show below that $Q_\ell$ contains all pairs of vertices $(v, u)$ where $u$ is $(ab^*c)$-reachable from $v$ through a path that only visits light vertices.
In contrast, $Q_h$ contains all pairs of vertices $(v, u)$ where $u$ is $(ab^*c)$-reachable from $v$ through a path that visits at least one heavy vertex.
\change{Therefore, the algorithm computes the correct output to the RPQ $ab^*c$.}
\change{The Datalog program contains a recursive definition for a binary relation $T$,
given by Eq.~\eqref{eq:T:base} and~\eqref{eq:T:recursive}.
We evaluate $T$ using {\em semi-na\"ive evaluation}.
In particular, at every iteration of the recursion, we only examine the {\em new} pairs 
of vertices $(x, z) \in T$ that were just added to $T$ in the previous iteration, and go through vertices $y$ that are $(b)$-reachable from $z$. We add the pairs $(x, y)$ to $T$ if they do not already exist. Only those {\em newly added} pairs to $T$ will be examined in the next iteration, and so on.
We show in Section~\ref{sec:complexity} that using this evaluation strategy, $Q_h$ can be computed in time
$O(\min(\out\cdot\sqrt{|E|}, |V|\cdot|E|))$.
In contrast, $Q_\ell$ takes time $O(|E|^{3/2})$.
}

\begin{algorithm}[th!]
    \caption{The $\ospg$ algorithm for the RPQ $a b^* c$}
    \begin{algorithmic}[1]
        \Statex{\textbf{Input}}: An edge-labeled graph $G=(V,E,\Sigma =\{a, b, c\})$.
        \Statex{\textbf{Output}}: The list of pairs $(v, u) \in V \times V$ where $u$ is $(ab^*c)$-reachable from $v$.
        \Statex{}
        \State Compute $R(X, Y)$ below while maintaining for every $X$ value, at most $\Delta\defeq\sqrt{|E|} + 1$ distinct $Y$ values:\label{algo:ab*c:step1}
        \Statex{\Comment{See Section~\ref{subsec:algo:ab*c}.}}
        \begin{align}
            R(X, Y) &= E(X, ``c", Y)\label{eq:R:base}\\
            R(X, Y) &= E(X, ``b", Z) \wedge R(Z, Y)\label{eq:R:recursive}
        \end{align}
        \State Compute $R_\ell$ and $R_h$, where $\deg_R(X)$ denotes the number of distinct $Y$ values satisfying $R(X, Y)$:\label{algo:ab*c:step2}
        \begin{align}
            R_\ell(X, Y) &= R(X, Y) \wedge \deg_R(X) \leq \sqrt{|E|}\label{eq:R_ell}\\
            R_h(X) &= R(X, Y) \wedge\deg_R(X) = \sqrt{|E|}+1\label{eq:R_h}
        \end{align}
        \State \change{Compute $Q_\ell$ and $Q_h$. For $Q_h$, use {\em semi-na\"ive evaluation}
        to compute $T$:} \label{algo:ab*c:step3} \Comment{\change{See Section~\ref{subsec:algo:ab*c}.}}
        \begin{align}
            Q_\ell(X, Y) &= E(X, ``a", Z) \land R_\ell(Z, Y)\label{eq:Q_ell}\\
            T(X, Y) &= E(X, ``a", Y) \land R_h(Y) \label{eq:T:base}\\
            T(X, Y) &= T(X, Z) \land E(Z, ``b", Y)\label{eq:T:recursive}\\
            Q_h(X, Y) &= T(X, Z) \land E(Z, ``c", Y)\label{eq:Q_h}
        \end{align}
        \State Return $Q_\ell \cup Q_h$ \label{algo:ab*c:step4}
    \end{algorithmic}
    \label{algo:ab*c}
\end{algorithm}
Section~\ref{sec:discussion} demonstrates the above algorithm on a couple of examples
and shows how it outperforms the traditional product graph algorithm.
We prove below that our algorithm is correct.
\begin{theorem}
    Algorithm~\ref{algo:ab*c} computes the correct output to the RPQ $a b^* c$.
    \label{thm:correctness}
\end{theorem}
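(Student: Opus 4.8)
The plan is to show that the union $Q_\ell \cup Q_h$ returned by Algorithm~\ref{algo:ab*c} equals exactly the set of pairs $(v,u)$ such that $u$ is $(ab^*c)$-reachable from $v$. I would break this into two containments, but the cleanest route is to characterize membership in terms of paths. Observe that $u$ is $(ab^*c)$-reachable from $v$ iff there is a self-loop $E(v,``a",v)$, a sequence of $``b"$-edges $v=z_0, z_1, \ldots, z_k$, and a $``c"$-edge $E(z_k,``c",u)$; equivalently, there is an $``a"$-edge from $v$ to some $z$ and $u$ is $(b^*c)$-reachable from $z$. Since the only $``a"$-edges are the self-loops added in the reduction (or, in the standalone statement, whatever $``a"$-edges are in $G$), the key object is the $(b^*c)$-reachability relation, which I will call $R^\star(X,Y)$ — the ``true'' version of $R$ without the $\Delta$-truncation. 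A straightforward induction on the length of the $``b"$-path shows $R^\star$ is exactly the least fixpoint of Eq.~\eqref{eq:R:base}--\eqref{eq:R:recursive}, and that the truncated relation $R$ satisfies $R \subseteq R^\star$ with the extra guarantee $\deg_R(X) = \deg_{R^\star}(X)$ whenever $\deg_{R^\star}(X) \le \sqrt{|E|}$, and $\deg_R(X) = \sqrt{|E|}+1$ otherwise.

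Next I would handle the two relations separately. For $Q_\ell$: I claim $Q_\ell(v,u)$ holds iff $u$ is $(ab^*c)$-reachable from $v$ via a path whose first $``b"$-vertex (equivalently, the vertex reached by the $``a"$-edge) is light, i.e. $\deg_{R^\star}(z) \le \sqrt{|E|}$. This follows because for a light vertex $z$ the truncation is vacuous, so $R_\ell(z,Y)$ (Eq.~\eqref{eq:R_ell}) coincides with $R^\star(z,Y)$, and Eq.~\eqref{eq:Q_ell} then composes with the $``a"$-edge. For $Q_h$: I claim $Q_h(v,u)$ holds iff $u$ is $(ab^*c)$-reachable from $v$ via a path whose $``a"$-target $z$ is heavy. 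Here the relation $T$ defined by Eq.~\eqref{eq:T:base}--\eqref{eq:T:recursive} is the set of pairs $(X,Y)$ reachable by an $``a"$-edge into a heavy vertex followed by zero or more $``b"$-edges; an induction on the number of $``b"$-steps shows $T(X,Y)$ holds iff there is a heavy $z$ with $E(X,``a",z)$ and a $``b"$-path from $z$ to $Y$. Composing with a final $``c"$-edge (Eq.~\eqref{eq:Q_h}) gives the claim. Note $T$ is computed with no truncation, which is why $Q_h$ is not affected by the loss of tuples in $R$.

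Finally I would combine the two: every $(ab^*c)$-path from $v$ to $u$ goes through an $``a"$-edge to some vertex $z$ that is either light or heavy, so the pair $(v,u)$ is captured by $Q_\ell$ in the first case and by $Q_h$ in the second; conversely both $Q_\ell$ and $Q_h$ only produce genuine $(ab^*c)$-reachable pairs. Hence $Q_\ell \cup Q_h$ is exactly the desired output. The main obstacle — and the only place the argument is delicate — is verifying that the $\Delta$-truncation of $R$ does not lose any needed output pair: one must check that whenever $z$ is light, $R$ has not dropped any of the (at most $\sqrt{|E|}$) values $Y$ with $R^\star(z,Y)$, and whenever $z$ is heavy, the contribution of $z$ is recovered entirely through the untruncated relation $T$ in the $Q_h$ branch. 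Making the light/heavy dichotomy on $\deg_{R^\star}$ align precisely with the test $\deg_R(X) \le \sqrt{|E|}$ versus $\deg_R(X) = \sqrt{|E|}+1$ in Eq.~\eqref{eq:R_ell}--\eqref{eq:R_h} is the crux; everything else is routine fixpoint induction.
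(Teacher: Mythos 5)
Your proposal is correct and follows essentially the same route as the paper's proof: soundness of $Q_\ell$ and $Q_h$ is immediate from their definitions, and completeness rests on the light/heavy dichotomy of the vertex reached by the $``a"$-edge, with the untruncated relation $T$ handling the heavy case and the vacuity of the $\Delta$-truncation handling the light case. The only (cosmetic) difference is that you case directly on whether the $``a"$-target is light or heavy, whereas the paper cases on ``all path vertices light'' versus ``some vertex heavy'' and then uses the monotonicity $N_{b^*c}(w_j) \supseteq N_{b^*c}(w_i)$ for $j<i$ to conclude the first vertex is heavy---the two splits coincide for exactly that reason.
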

\begin{proof}
    By definition of $Q_\ell$ and $Q_h$, both are subsets of the output of the RPQ $a b^* c$.
    We prove below that they are also a superset of the output.
    Consider a pair of vertices $(v, u)$ in the output, i.e.~where $u$ is $(a b^* c)$-reachable from $v$.
    Consider a path with label in $(ab^* c)$ consisting of the sequence of vertices $v, w_1, \ldots, w_k, u$ for some $k\geq 1$.
    In particular, the edge set $E$ contains the labeled edges $(v, ``a", w_1)$, $(w_1, ``b", w_2)$, $\ldots$,
    $(w_{k-1}, ``b", w_k)$, $(w_k, ``c", u)$. We recognize two cases:
    \begin{itemize}
        \item Case 1: All vertices $w_1, \ldots, w_{k}$ are light. In this case, $R_\ell(w_1, u)$ holds. Hence by Rule~\eqref{eq:Q_ell}, we have $Q_\ell(v, u)$.
        \item Case 2: There exists some $i \in [k]$ where $w_i$ is heavy.
        In this case, all vertices $w_j$ for $j < i$ are heavy as well.
        This is because for each vertex $w_j$ where $j < i$, we have
        $N_{b^*c}(w_j) \supseteq N_{b^*c}(w_i)$. (Recall Definition~\ref{defn:tau-reachable}.)
        Hence, by Rule~\eqref{eq:T:base}, we have $T(v, w_1)$.
        Moreover, by repeatedly applying Rule~\eqref{eq:T:recursive}, we have $T(v, w_2)$, $\ldots$, $T(v, w_k)$.
        Finally, by Rule~\eqref{eq:Q_h}, we have $Q_h(v, u)$.
    \end{itemize}
\end{proof}
\section{Complexity Analysis}
\label{sec:complexity}
In this section, we prove the following theorem about the time complexity of the \ospg algorithm.

\begin{theorem}
    For any RPQ $Q$ over an edge-labeled graph $G = (V, E, \Sigma)$, the \ospg algorithm computes the output of $Q$ in time $O(|E|^{3/2} + \min(\out\cdot\sqrt{|E|}, |V|\cdot|E|))$.
    \label{thm:complexity}
\end{theorem}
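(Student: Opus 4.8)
By the reduction of Section~\ref{subsec:algo:reduction} it suffices to analyze Algorithm~\ref{algo:ab*c}. That reduction runs in time $O(|E|)$ (after discarding vertices carrying no edge, so that $|V| = O(|E|)$) and produces an edge-labeled graph $G'=(V',E',\{a,b,c\})$ with $|V'| = O(|V|)$, $|E'| = O(|E|)$, output size $\out' = O(\out)$, and $|V'| = O(|E'|)$. Since a running-time bound for Algorithm~\ref{algo:ab*c} built from sums, products, $\min$, and square roots is monotone in each of $|V'|,|E'|,\out'$, it translates into the stated bound for \ospg upon substituting $O(|V|), O(|E|), O(\out)$. So the task reduces to showing that Algorithm~\ref{algo:ab*c}, on input $G=(V,E,\{a,b,c\})$ with $|V| = O(|E|)$ and output size $\out$, runs in time $O(|E|^{3/2} + \min(\out\sqrt{|E|}, |V||E|))$. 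Correctness is Theorem~\ref{thm:correctness}; only the runtime of the four steps remains.

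\textbf{Step~\ref{algo:ab*c:step1}, computing $R$, in time $O(|E|^{3/2})$.} The plan is \emph{not} to run the naive fixpoint of \eqref{eq:R:base}--\eqref{eq:R:recursive}, but to exploit that all vertices in one strongly connected component of $G_b \defeq (V, \{(X,Z) : E(X,b,Z)\})$ share the same set $N_{b^*c}(\cdot)$. I would compute the SCCs of $G_b$ in time $O(|E|)$, contract them to a DAG, and process the SCCs in reverse topological order, maintaining for each SCC $S$ a hash set $H_S$ of at most $\Delta = \sqrt{|E|}+1$ vertices: initialize $H_S$ with the $c$-successors of the vertices of $S$, then for each $b$-edge leaving $S$ into another SCC $S'$ insert the elements of $H_{S'}$ into $H_S$, in both phases stopping the moment $|H_S|$ reaches $\Delta$. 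A short invariant shows $H_S$ is either all of $N_{b^*c}(S)$ (when $|N_{b^*c}(S)| < \Delta$) or a set of exactly $\Delta$ vertices of $N_{b^*c}(S)$ — precisely the capped relation Step~\ref{algo:ab*c:step1} asks for; materializing $R(X) \defeq H_{\mathrm{SCC}(X)}$ costs $O(|V|\cdot\Delta) = O(|E|^{3/2})$. The merging work is $O(\sum_X \deg_c(X)) + O(\Delta\cdot\sum_X \deg_b(X)) = O(|E| + \sqrt{|E|}\cdot|E|) = O(|E|^{3/2})$, since for each SCC $S$ we consume at most $\Delta$ elements per $b$-edge leaving $S$. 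Step~\ref{algo:ab*c:step2} is a single pass over $R$, so $O(|R|) = O(|E|^{3/2})$; note that for a light vertex $X$, $\deg_{b^*c}(X) \le \sqrt{|E|} < \Delta$, so $R$ stored \emph{all} of $N_{b^*c}(X)$ and hence $R_\ell(X,\cdot) = N_{b^*c}(X)$ — used below.

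\textbf{Steps~\ref{algo:ab*c:step3} and~\ref{algo:ab*c:step4}.} For $Q_\ell$ I would iterate over the $a$-edges $(X,a,Z)\in E$ and, for each, over $R_\ell(Z)$, deduplicating into a hash set; since $|R_\ell(Z)| \le \sqrt{|E|}$, this costs $O(|E| + \sum_{(X,a,Z)\in E} |R_\ell(Z)|) = O(|E| + |E|\sqrt{|E|}) = O(|E|^{3/2})$. For the heavy part, let $h_a$ be the number of distinct first coordinates occurring in $T$, i.e.\ the number of vertices $X$ with an $a$-edge into a heavy vertex. A semi-naive evaluation of \eqref{eq:T:base}--\eqref{eq:Q_h} costs $O(|E|)$ for the base rule plus $O(\sum_{(X,Z)\in T}\deg_b(Z))$ for propagation and $O(\sum_{(X,Z)\in T}\deg_c(Z))$ for deriving $Q_h$; grouping by $X$ and using $\sum_Z \deg_b(Z) \le |E|$ and $\sum_Z \deg_c(Z) \le |E|$, this is $O(|E| + h_a\cdot|E|)$. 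It remains to bound $h_a \le \min(|V|,\, \out/\sqrt{|E|})$: trivially $h_a \le |V|$; and if $(X,a,Y)\in E$ with $Y$ heavy, then $\deg_{b^*c}(Y) > \sqrt{|E|}$ while $N_Q(X) \supseteq N_{b^*c}(Y)$, so $\deg_Q(X) > \sqrt{|E|}$, and since the sets $\{X\}\times N_Q(X)$ are pairwise disjoint subsets of the output, there are at most $\out/\sqrt{|E|}$ such $X$. Hence the heavy part costs $O(|E| + \min(\out\sqrt{|E|}, |V||E|))$, Step~\ref{algo:ab*c:step4} adds only $O(\out)$, and summing all steps — with the stray $O(|E|)$ terms absorbed into $O(|E|^{3/2})$ — gives the claimed bound.

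\textbf{Main obstacle.} I expect the delicate point to be Step~\ref{algo:ab*c:step1}: the cap of $\Delta$ distinct targets per vertex is exactly what keeps this step at $O(|E|^{3/2})$, and one must verify both that the SCC-based merge yields a \emph{valid} $\Delta$-capped copy of $N_{b^*c}$ no matter where truncation happens, and that per-SCC merging reads $O(\Delta)$ rather than $O(|R(\cdot)|)$ elements per outgoing $b$-edge — had we stored the true uncapped reachability sets, this step would already be as costly as \pg. The output-sensitive gain over \pg then rests entirely on the elementary counting inequality $h_a \le \out/\sqrt{|E|}$.
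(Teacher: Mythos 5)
Your proposal is correct and follows essentially the same argument as the paper: reduce to $ab^*c$, compute the $\Delta$-capped relation $R$ in $O(|E|\cdot\Delta)=O(|E|^{3/2})$, handle the light part in $O(|E|^{3/2})$ via the $\sqrt{|E|}$ cap on $R_\ell$, and bound the heavy part by exactly the paper's counting argument (your $h_a$ is the paper's $|S_h|\le\min(\out/\sqrt{|E|},|V|)$, with an $O(|E|)$ single-source traversal per such $X$). The only divergence is the implementation of Step~\ref{algo:ab*c:step1}: you compute the capped sets via SCC condensation of the $b$-graph and a reverse-topological merge, whereas the paper propagates capped per-vertex lists directly, charging each $b$-edge at most once per new list element for the same $O(|E|\cdot\Delta)$ bound; both maintain the same invariant ($\deg_R(X)=\min(\deg_{b^*c}(X),\Delta)$), so this is a cosmetic rather than substantive difference.
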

\begin{proof}
Thanks to the reduction in Section~\ref{subsec:algo:reduction}, we only need to analyze the runtime
of Algorithm~\ref{algo:ab*c} for the special RPQ $a b^* c$.
We start with analyzing the runtime needed for the first step of Algorithm~\ref{algo:ab*c}, which computes the relation $R(X, Y)$.
Recall that for each vertex $x\in V$, $R$ contains at most $\Delta\defeq\sqrt{|E|}+1$ different vertices $y$ that are $(b^*c)$-reachable from $x$.
Therefore, $|R| \leq |V|\cdot \Delta$.
In order to compute $R$, we maintain, for every vertex $x$, a list $\mylist(x)$ of up to $\Delta$ different vertices $y$ that are $(b^*c)$-reachable from $x$, as follows:
\begin{itemize}
    \item For every vertex $x \in V$, we initialize $\mylist(x)$ to contain up to
    $\Delta$ different vertices $y$ that are $(c)$-reachable from $x$. This can be done in time $O(|E|)$.
    \item We repeatedly expand the lists $\mylist(x)$ for all $x \in V$ as follows,
    until no new vertex is added to any list:
    Whenever a {\em new} vertex $y$ is added to $\mylist(x)$ for some vertex $x$, we go through all vertices
    $w$ such that there is an edge $(w, ``b", x) \in E$ and check whether $y$ is already
    in $\mylist(w)$. If not and $|\mylist(w)| < \Delta$, then we add $y$ to $\mylist(w)$.
    During this entire expansion process, each edge $(w, ``b", x)$ will be checked
    against every vertex $y$ in $\mylist(x)$ only once.
    Hence, the total runtime is $O(|E|\cdot \Delta)=O(|E|^{3/2})$.
\end{itemize}
Once $R$ is computed, the relations $R_\ell$ and $R_h$ in the second step of Algorithm~\ref{algo:ab*c} can be computed in linear time in the size of $R$.
    
    Finally, we analyze the runtime of the third step of the algorithm, which computes $Q_\ell$ and $Q_h$.
    The relation $Q_\ell$ can be straightforwardly computed in time $O(|E|^{3/2})$. This is because in Rule~\eqref{eq:Q_ell},
    the definition of $R_\ell$
    implies that for every $Z$-value, we have at most $\sqrt{|E|}$ $Y$-values.
    We are left to analyze the runtime of computing $Q_h$.
    \begin{claim}
        $Q_h$ can be computed in time $O(\min(\out\cdot\sqrt{|E|}, |V|\cdot|E|))$.
        \label{clm:Qh-runtime}
    \end{claim}
    \begin{proof}[Proof of Claim~\ref{clm:Qh-runtime}]
        Consider the set of vertices $S_h$ that is defined as follows:
        \begin{align*}
            S_h(X) = E(X, ``a", Y) \land R_h(Y)
        \end{align*}
        The size of $S_h$ cannot exceed $\min\left(\frac{\out}{\sqrt{|E|}}, |V|\right)$. This is because for every
        vertex $x \in S_h$, there are more than $\sqrt{|E|}$ different vertices $y$ such that
        the pair $(x, y)$ is in the output of the RPQ $a b^* c$, and the total number of such
        pairs $(x, y)$ is $\out$ by definition.
        By Rule~\eqref{eq:T:base} and then inductively by Rule~\eqref{eq:T:recursive},
        we have $\pi_X T(X, Y) \subseteq S_h$.
        By Rule~\eqref{eq:Q_h}, we have $\pi_X Q_h(X, Y) \subseteq \pi_X T(X, Z) \subseteq S_h$.
        \change{For a {\em fixed} value $x \in S_h$ of the variable $X$,
        the semi-na\"ive evaluation of Rules \eqref{eq:T:base} and \eqref{eq:T:recursive}
        takes $O(|E|)$ time because it corresponds to a single-source traversal of the input graph where the source is $x$.
        In particular, during this single-source traversal,
        every vertex $z$ that is $(ab^*)$-reachable from $x$ will be added to $T$ only once,
        and the outgoing edges from $z$ will be examined only once, thus leading to a total of
        $O(|E|)$ time.
        Rule \eqref{eq:Q_h} also takes $O(|E|)$ time for a fixed value $x$ of $X$ because the other two
        variables $Z$ and $Y$ form an edge, $E(Z, ``c", Y)$.
        Hence, the total time over all possible values $x \in S_h$ of the variable $X$
        is
        $O\left(\min\left(\frac{\out}{\sqrt{|E|}}, |V|\right)\cdot |E|\right) = O(\min(\out\cdot\sqrt{|E|}, |V|\cdot|E|))$, as desired.}
    \end{proof}
\end{proof}

\section{\change{When is \ospg strictly better than \pg?}}
\label{sec:discussion}

\change{As mentioned in the introduction, the $\ospg$ complexity of $O(|E|^{3/2} + \min(\out\cdot\sqrt{|E|}, |V|\cdot|E|))$ is never higher than the $\pg$ complexity of $O(|V|\cdot|E|)$ in any case.
This is because every edge-labeled graph satisfies $|E| \leq O(1)\cdot|V|^2$.
Hence, $|V|$ ranges from $O(|E|^{1/2})$ to $O(|E|)$, and the $\pg$ complexity ranges from
$O(|E|^{3/2})$ to $O(|E|^2)$.
Moreover, the $\ospg$ complexity is strictly lower than $\pg$ when the input graph is sparse and the query output is small.
This holds specifically for $\out = O((|V|\cdot\sqrt{|E|})^{1-\alpha})$ and $|E|=O(|V|^{2-\beta})$ for any $\alpha \in (0,1]$ and $\beta \in (0,2)$. This regime includes common scenarios, where the query is selective and the number of edges is far from maximum. In this regime, the improvement of \ospg over \pg is due to the unnecessary time used by \pg in the breadth-first search phase when only a few output pairs are eventually discovered. We next exemplify this gap between \ospg and \pg in two cases.}

\begin{example}
    \label{ex:one-path}
    Consider an input edge-labeled graph representing a path of length $N$ where all edges have a label $``b"$.
    In particular, suppose that the vertices are $\{1, 2, \ldots, N\}$ and the edges are
    $\{(1, ``b", 2), (2, ``b", 3), \ldots, (N-1, ``b", N)\}$.
    Consider the RPQ $b^* c$.
    On this instance, the output is empty since there is no path with label $b^* c$.
    The NFA for $b^* c$ has an initial state $q_0$, an accepting state $q_1$, and transitions
    $(q_0, ``b", q_0)$ and $(q_0, ``c", q_1)$.
    The product graph has vertices $\{(1, q_0), \ldots, (N, q_0)\} \cup \{(1, q_1), \ldots, (N, q_1)\}$
    and edges $\{((1, q_0), (2, q_0)), \ldots, ((N-1, q_0), (N, q_0))\}$.
    \begin{itemize}
    \item The traditional $\pg$ algorithm takes time $\Omega(N^2)$ on this instance.
    This is because for every one of the vertices $\{(1, q_0), \ldots, (N, q_0)\}$ that correspond to the initial state, it will run a BFS to discover a path to all subsequent vertices. Yet, it will never discover a way to extend any of these paths to reach any of the vertices $\{(1, q_1), \ldots, (N, q_1)\}$ that correspond to the accepting state.
    \item In contrast, our $\ospg$ algorithm solves this query in $O(1)$ time.
    In particular, consider the relation $R(X, Y)$ that is defined in the first step of Algorithm~\ref{algo:ab*c}. This relation lists for every $X$-value, up to $\sqrt{N}+1$ different $Y$-values that are $(b^*c)$-reachable from $X$.
    In this example, this relation is empty, hence, subsequent relations $R_\ell, R_h, Q_\ell$, and $Q_h$ that are computed in the remaining steps of Algorithm~\ref{algo:ab*c} are also empty.
    \end{itemize}
\end{example}

In the above example, it is possible to argue that the only reason $\ospg$ outperforms $\pg$
is because $\pg$ starts BFS from the beginning of the regular expression $b^*c$.
In contrast, if we were to modify $\pg$ to start BFS from the end of the regular expression and work backwards,
then every BFS that $\pg$ performs would terminate in constant time, hence $\pg$ would solve this query in $O(|V|) = O(N)$ time.
However, in the next example, we will show that this is not really the main issue with $\pg$:
Even if we extend the $\pg$ algorithm to start BFS simultaneously from {\em both} the beginning {\em and} the end of the regular expression, it would still perform poorly compared to $\ospg$ on some instances.

\begin{example}
    \label{ex:two-cycles}
    Now consider an input edge-labeled graph consisting of two disjoint cycles of length $N$ each:
    \begin{itemize}
        \item In the first cycle, every pair of consecutive vertices is connected by two edges: one with label $``a"$
        and the other with label $``b"$.
        \item In the second cycle, every pair of consecutive vertices is connected by two edges: one with label $``b"$
        and the other with label $``c"$.
    \end{itemize}
    \begin{figure}
    \begin{tikzpicture}[scale = 0.7, yscale=0.6, every node/.style={scale=.8, inner sep = 2}]
        \begin{scope}
            \def\numnodes{6}
            \def\radius{3cm}

            \node at (0, 0) {First cycle};
            
            \foreach \i in {1,...,\numnodes} {
                \node[circle, draw, fill=white] (node\i) at ({180-360/\numnodes * (\i-1)}:\radius) {$\i$};
            }
            
            \foreach \i in {1,...,\numnodes} {
                \pgfmathtruncatemacro{\j}{mod(\i, \numnodes) + 1}
                
                \draw[->] (node\i) edge[bend left=15, gray] node[midway, black]{$a$} (node\j); 
                \draw[->] (node\i) edge[bend right=15, gray] node[midway, black]{$b$} (node\j); 
            }
        \end{scope}

        \begin{scope}[shift = {(8cm, 0)}]
            \def\numnodes{6}
            \def\radius{3cm}

            \node at (0, 0) {Second cycle};
            
            \foreach \i in {1,...,\numnodes} {
                \node[circle, draw, fill=white, inner sep = 1] (node\i) at ({180-360/\numnodes * (\i-1)}:\radius) {$\i'$};
            }
            
            \foreach \i in {1,...,\numnodes} {
                \pgfmathtruncatemacro{\j}{mod(\i, \numnodes) + 1}
                
                \draw[->] (node\i) edge[bend left=15, gray] node[midway, black]{$b$} (node\j); 
                \draw[->] (node\i) edge[bend right=15, gray] node[midway, black]{$c$} (node\j); 
            }
        \end{scope}
    \end{tikzpicture}
    \caption{The input graph for Example~\ref{ex:two-cycles} for $N=6$.}
    \Description{The input graph for Example~\ref{ex:two-cycles}.}
    \label{fig:two-cycles}
    \end{figure}
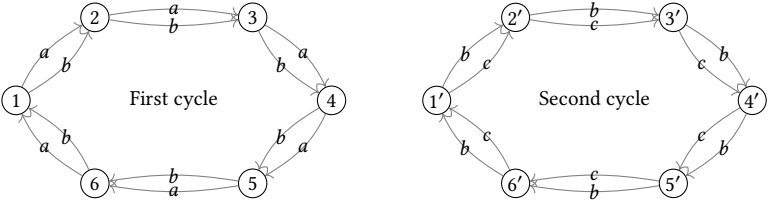
    Figure~\ref{fig:two-cycles} depicts this graph for $N=6$.
    Suppose that the RPQ is $a b^* c$. The output is empty.
    The NFA has states $\{q_0, q_1, q_2\}$ where $q_0$ is an initial state, $q_2$ is an accepting state, and the transitions are $\{(q_0, ``a", q_1), (q_1, ``b", q_1), (q_1, ``c", q_2)\}$.
    The product graph has edges $((v, q_0), (u, q_1))$
    and $((v, q_1),$ $(u, q_1))$ for every pair of consecutive vertices $(v, u)$ in the first cycle.
    It also has edges $((v, q_1), (u, q_1))$ and $((v, q_1), (u, q_2))$ for every pair of consecutive vertices $(v, u)$ in the second cycle.
    The goal is to find a path from $(v, q_0)$ to $(u, q_2)$ for every pair of vertices $(v, u)$.
    \begin{itemize}
        \item Even if we run BFS from both the beginning and the end, the $\pg$ approach still takes time
        $\Omega(N^2)$. This is because neither BFS can immediately discover that the output is empty.
        In particular, forward BFS will discover a path from $(v, q_0)$ to $(u, q_1)$
        for every pair of vertices $v$ and $u$ in the first cycle.
        Similarly, backward BFS will discover a path
        $(v, q_1)$ to $(u, q_2)$ for every pair of vertices $v$ and $u$ in the second cycle.
        \item In contrast, the $\ospg$ algorithm solves this query in $O(N^{3/2})$ time.
        Since the regular expression is $a b^*c$, we can use Algorithm~\ref{algo:ab*c} directly.
        In particular, $\Delta$ defined in the first step of Algorithm~\ref{algo:ab*c} will be $O(\sqrt{N})$.
        Every vertex in the second cycle will have a $(b^*c)$-degree of $O(N)$ and thus will be heavy.
        In contrast, vertices in the first cycle will have a $(b^*c)$-degree of $0$ and won't appear at all in the relation $R(X, Y)$: These are neither heavy nor light. Overall, $R$ will have size $O(N^{3/2})$ and can be computed in the same time.
        The relation $R_\ell$ from Eq.~\eqref{eq:R_ell} will be empty, while $R_h$ from Eq.~\eqref{eq:R_h} will contain every vertex in the second cycle and will have size $O(N)$.
        Since $R_\ell$ is empty, $Q_\ell$ from Eq.~\eqref{eq:Q_ell} will be empty as well.
        Although $R_h$ is not empty, $T$ will be empty because the join $E(X, ``a", Y) \land R_h(Y)$ from Eq.~\eqref{eq:T:base} is empty. As a result, $Q_h$ from Eq.~\eqref{eq:Q_h} is empty.
    \end{itemize}
\end{example}
\section{Special classes of RPQs}
\label{sec:special}

In this section, we study two interesting special classes of RPQs showing that they admit faster or simpler algorithms than the general \ospg algorithm from Section~\ref{sec:algo}:
\begin{itemize}
    \item {\em Kleene-free RPQs:} These are RPQs that do not involve the Kleene star operator;
    see Definition~\ref{defn:kleene-free}. We will show that this class of RPQs admits a faster algorithm with complexity $O(|E| + |E|\cdot \sqrt{\out})$.
    \item {\em The RPQ $a^*$:} This RPQ is equivalent to querying for the transitive closure
    of a graph. For this RPQ, we will show that a much simpler and more standard algorithm already meets the same complexity as \ospg. The analysis, however,
    is quite involved.
\end{itemize}

\subsection{Kleene-free RPQs}
The following theorem concerning the output-sensitive complexity of a $k$-path query was recently proved in~\cite{2024arXiv240605536H}:

\begin{theorem}[$k$-path query~\cite{2024arXiv240605536H}]
Given $k$ binary relations $E_1, \ldots, E_k$ for some constant $k$,
consider the following conjunctive query:
\begin{align}
    Q(X_1, X_{k+1}) &\defeq E_1(X_1, X_2) \land E_2(X_2, X_3) \land \ldots \land E_k(X_k, X_{k+1})
\end{align}
The above query can be computed in time\footnote{We made explicit the time needed to read the input by adding the term $N$ to the complexity.} $O(N + N \cdot \sqrt{\out})$, where $N$ is the input size,
 i.e.,~$N\defeq \sum_{i\in[k]} |E_i|$, and $\out$ is the output size, i.e.,~$\out \defeq |Q|$.
\label{thm:k-path}
\end{theorem}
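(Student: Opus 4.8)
The plan is to prove Theorem~\ref{thm:k-path} by a degree-based (heavy/light) dichotomy in the spirit of Algorithm~\ref{algo:ab*c}, but with a threshold calibrated to $\sqrt{\out}$ rather than $\sqrt{N}$. Since $\out$ is not known in advance, I would wrap everything in a doubling search: run the procedure for geometrically increasing guesses $g = 1, 2, 4, \ldots$ of $\out$, using threshold $\Delta \defeq \lceil\sqrt{g}\,\rceil$, and whenever the run for guess $g$ is about to emit more than $g$ distinct output pairs, abort it and move on to $2g$. Because the true output size is exactly $\out$, the first successful guess satisfies $\out \le g < 2\out$; provided (as I argue below) that the run for any guess $g$ costs $\bigO{N + N\sqrt{g}}$ whether or not it succeeds, the total is a geometric sum dominated by the last guess, i.e.\ $\bigO{N + N\sqrt{\out}}$.

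Fix a guess $g$ and $\Delta \defeq \lceil\sqrt g\,\rceil$. The core subroutine is a capped right-to-left reachability computation along the chain. For each position $i \in \{k, k-1, \ldots, 2\}$ and each value $v$ occurring at position $i$, compute a set $S_i(v)$ of at most $\Delta+1$ distinct values at position $k+1$ reachable from $v$ via $E_i \bowtie E_{i+1}\bowtie\cdots\bowtie E_k$: the base case $S_k(v)$ keeps up to $\Delta+1$ distinct $u$ with $(v,u)\in E_k$, and $S_i(v)$ merges the sets $S_{i+1}(w)$ over all $(v,w)\in E_i$, stopping as soon as $\Delta+1$ distinct values are collected. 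With a hash set per value, each edge $(v,w)\in E_i$ is inspected against at most $|S_{i+1}(w)|\le\Delta+1$ values (and an edge is skipped once $S_i(v)$ is already full), so level $i$ costs $\bigO{|E_i|\cdot\Delta}$, and since $\sum_i|E_i| = N$ all levels together cost $\bigO{N\sqrt g}$. The key invariant, proved by downward induction on $i$, is that $|S_i(v)| = \Delta+1$ exactly when $v$ reaches strictly more than $\Delta$ distinct endpoints, and $S_i(v)$ is the exact reachable set otherwise; the only delicate point is that overflow must propagate even when $v$ exceeds $\Delta$ solely because a union of several ``light'' successors does so — this holds because in that case $\bigcup_{(v,w)\in E_i}S_{i+1}(w)$ already has more than $\Delta$ elements.

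With $S_2(\cdot)$ in hand, split the output at position $2$. Call a value $v$ at position $2$ \emph{light} if $|S_2(v)|\le\Delta$ (so $S_2(v)$ is its complete reachable set) and \emph{heavy} otherwise. Light part: enumerate $\bigcup_{v\text{ light}} L_v \times S_2(v)$, where $L_v \defeq \{a : (a,v)\in E_1\}$, inserting pairs into a hash set for deduplication; this costs $\bigO{\sum_v|L_v|\cdot\Delta} = \bigO{|E_1|\cdot\Delta} = \bigO{N\sqrt g}$ unconditionally. Heavy part: let $H$ be the set of values $a$ with $(a,v)\in E_1$ for some heavy $v$; every $a\in H$ reaches more than $\Delta$ distinct endpoints, and since the pair sets $\{a\}\times(\text{endpoints of }a)$ are pairwise disjoint over distinct $a$, we get $|H|\cdot\Delta < \out$, hence $|H| = \bigO{\out/\Delta} = \bigO{\sqrt\out}$ on a successful guess. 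For each $a\in H$ run a single-source layered traversal of the chain from $a$ (layer $0=\{a\}$; layer $i$ from layer $i-1$ via $E_i$, with deduplication inside each layer), visiting each edge at most once for total cost $\bigO{\sum_i|E_i|} = \bigO{N}$, and add the endpoints found in layer $k$ to the output hash set. Correctness: for an output pair $(a,c)$, if $a\in H$ it is found by the traversal from $a$; if $a\notin H$, then every position-$2$ value $v$ with $(a,v)\in E_1$ that leads anywhere is light (else $a\in H$), so $c\in S_2(v)$ and the light part emits $(a,c)$; thus the union of the two parts is exactly the output, with repeats removed by hashing.

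The main obstacle is bounding the work of \emph{unsuccessful} runs ($g < \out$), where $|H|$ may far exceed $\sqrt g$ and a single traversal may emit up to $|V|$ pairs. This is what the abort rule handles: each heavy-start traversal from a distinct $a$ inserts more than $\Delta$ fresh output pairs with first coordinate $a$, all distinct across $a$, so after at most $\lceil g/\Delta\rceil + 1 = \bigO{\sqrt g}$ such traversals the output hash set exceeds $g$ and the run aborts, having spent $\bigO{N\sqrt g}$. Since the capped DP and the light enumeration are $\bigO{N\sqrt g}$ regardless, every run costs $\bigO{N + N\sqrt g}$ (the additive $N$ also paying for reading the input and for the trivial $\out = 0$ case), and summing over $g = 1, 2, 4, \ldots$ up to $\bigO{\out}$ gives the claimed $\bigO{N + N\sqrt{\out}}$. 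The remaining work is to verify the capped-DP invariant and the disjointness/abort counting above, both of which are routine once the definitions are fixed.
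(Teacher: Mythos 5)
The paper itself offers no proof of Theorem~\ref{thm:k-path}: it is imported verbatim from~\cite{2024arXiv240605536H}, so there is no in-paper argument to compare against. Your proposal reads as a correct, self-contained derivation, and it is instructive to see how it relates to the machinery the paper does develop: you reuse the heavy/light, capped-reachability idea of Algorithm~\ref{algo:ab*c} (your $S_i(v)$ plays the role of $R$, your $H$ the role of $S_h$ in Claim~\ref{clm:Qh-runtime}), but with two changes that are exactly what buys the stronger bound $O(N + N\sqrt{\out})$ instead of $O(|E|^{3/2} + \out\sqrt{|E|})$: (i) because the $k$-path is acyclic level-by-level, the capped reachable sets can be computed in one right-to-left pass at cost $O(N\Delta)$, with the overflow-propagation invariant you state holding by a clean downward induction (no worklist propagation as needed for $b^*$); and (ii) the threshold is calibrated to $\sqrt{\out}$ via guess-and-double with an abort once more than $g$ distinct pairs exist, and your disjointness argument (distinct first coordinates of heavy sources) correctly bounds both $|H|$ on successful guesses and the number of traversals before an abort, so every run costs $O(N+N\sqrt{g})$ and the geometric sum (including the $O(N\log \out)=O(N\sqrt{\out})$ additive terms) gives the claim. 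One small imprecision worth fixing: a traversal from $a\in H$ need not insert more than $\Delta$ \emph{fresh} pairs, since the light stage may already have emitted pairs with first coordinate $a$; what your abort count actually requires, and what your disjointness observation already delivers, is that after $t$ traversals the output set contains more than $t\cdot\Delta$ pairs in total, so the conclusion stands with the wording adjusted.
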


The above theorem immediately implies:
\begin{corollary}[Theorem~\ref{thm:k-path}]
    Let $Q$ be an RPQ of the form $a_1 a_2\ldots a_k$
    where $a_1, \ldots, a_k$ are $k$ not necessarily distinct symbols from the alphabet $\Sigma$.
    Then, $Q$ can be evaluated over any edge-labeled graph $G = (V, E, \Sigma)$ in time $O(|E| + |E| \cdot \sqrt{\out})$.
    \label{cor:k-path}
\end{corollary}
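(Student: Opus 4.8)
The plan is to reduce the evaluation of the concatenation-only RPQ $Q = a_1 a_2 \ldots a_k$ directly to the $k$-path conjunctive query of Theorem~\ref{thm:k-path}, bypassing the general product-graph reduction of Section~\ref{subsec:algo:reduction}. First I would, for each $i \in [k]$, define the binary relation $E_i \defeq \{(X, Y) \mid (X, a_i, Y) \in E\}$, i.e.~the set of $a_i$-labeled edges of $G$ with the label stripped. Since $k$ is a constant in data complexity, all of $E_1, \ldots, E_k$ can be materialized in a single scan of $E$ that buckets the edges by their label, in total time $O(|E|)$; note that $E_i = E_j$ whenever $a_i = a_j$, but this duplication is harmless.

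The key observation is that a path from $v$ to $u$ in $G$ whose label is the word $a_1 a_2 \ldots a_k$ is precisely a sequence of vertices $v = w_0, w_1, \ldots, w_k = u$ such that $(w_{i-1}, a_i, w_i) \in E$, equivalently $(w_{i-1}, w_i) \in E_i$, for every $i \in [k]$. Hence, setting $X_i \defeq w_{i-1}$ for $i \in [k+1]$, the pair $(v, u)$ lies in the output of $Q$ over $G$ if and only if $(v, u)$ lies in the output of the conjunctive query
\begin{align*}
Q(X_1, X_{k+1}) &\defeq E_1(X_1, X_2) \land E_2(X_2, X_3) \land \ldots \land E_k(X_k, X_{k+1}).
\end{align*}
Because the RPQ semantics of Definition~\ref{defn:rpq} and the conjunctive query above both use set semantics for the pairs of endpoints --- intermediate vertices are existentially quantified in both, and repeated vertices along a walk are permitted in both --- the two output sets are literally equal, so the value $\out$ for the RPQ coincides with the value $\out \defeq |Q|$ for the $k$-path query.

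Applying Theorem~\ref{thm:k-path} then finishes the argument: the $k$-path query is evaluated in time $O(N + N \cdot \sqrt{\out})$ with $N \defeq \sum_{i \in [k]} |E_i|$, and since each $E_i$ is (a relabelling of) a subset of $E$ we have $N \le k \cdot |E| = O(|E|)$; adding the $O(|E|)$ preprocessing to build the $E_i$'s gives the claimed $O(|E| + |E| \cdot \sqrt{\out})$. There is no deep obstacle here; the only points needing care are (i) invoking the data-complexity assumption that $k$ is constant in order to conclude $N = O(|E|)$, and (ii) checking that projecting onto $(X_1, X_{k+1})$ does not inflate the output-size parameter relative to the RPQ --- which is immediate, because Theorem~\ref{thm:k-path} already returns the deduplicated set of endpoint pairs within the stated bound.
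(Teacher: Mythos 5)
Your proposal is correct and matches the paper's intended argument: the paper treats Corollary~\ref{cor:k-path} as an immediate consequence of Theorem~\ref{thm:k-path}, with exactly the reduction you spell out (bucket the edges of $G$ by label into binary relations $E_1,\ldots,E_k$ and invoke the $k$-path bound, using that $k$ is constant so $N=O(|E|)$). Your write-up simply makes explicit the endpoint-set correspondence and the $O(|E|)$ preprocessing that the paper leaves implicit.
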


In fact, Theorem~\ref{thm:k-path} further implies the following stronger corollary:
\begin{corollary}[Theorem~\ref{thm:k-path}]
    Let $Q$ be a  Kleene-free RPQ (Definition~\ref{defn:kleene-free}). Then, $Q$ can be evaluated over any edge-labeled graph $G = (V, E, \Sigma)$ in time $O(|E| + |E| \cdot \sqrt{\out})$.
    \label{cor:concatenation-or}
\end{corollary}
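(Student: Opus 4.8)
The plan is to reduce a Kleene-free RPQ to a constant number of $k$-path queries and then invoke Theorem~\ref{thm:k-path}. The crucial observation is that a Kleene-free regular expression (Definition~\ref{defn:kleene-free}) denotes a \emph{finite} language: a single symbol denotes a singleton language, and both union and concatenation preserve finiteness, so by structural induction $L(Q)$ is finite. Write $L(Q) = \{w_1, \ldots, w_m\}$, where each word is $w_i = a_{i,1} a_{i,2} \cdots a_{i,k_i}$ over $\Sigma$. Since the query $Q$ is fixed, both the number $m$ of words and each length $k_i$ are bounded by $|Q| = O(1)$. By the RPQ semantics, a pair $(v,u)$ is in the output of $Q$ over $G$ iff there is a path from $v$ to $u$ whose label is some $w_i$; hence, letting $\calO_i$ be the set of pairs connected by a path labeled $w_i$ and $\out_i \defeq |\calO_i|$, the output set is $\bigcup_{i\in[m]} \calO_i$, of size $\out$, and $\out_i \le \out$ for every $i$.

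Next I would compute each $\calO_i$ via Theorem~\ref{thm:k-path}. For every symbol $\sigma$ occurring in $Q$, build the binary relation $E_\sigma(X,Y) \defeq \{(x,y) \mid (x,\sigma,y)\in E\}$; one scan of $E$ produces all of them in total time $O(|E|)$, and $|E_\sigma| \le |E|$. A path from $v$ to $u$ labeled $w_i = a_{i,1}\cdots a_{i,k_i}$ is precisely a sequence $v = x_1, x_2, \ldots, x_{k_i+1} = u$ with $(x_j, x_{j+1}) \in E_{a_{i,j}}$ for all $j \in [k_i]$, so $\calO_i$ is exactly the output of the $k_i$-path query of Theorem~\ref{thm:k-path} instantiated with $E_j \defeq E_{a_{i,j}}$. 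Its input size is $N = \sum_{j\in[k_i]} |E_{a_{i,j}}| \le k_i\,|E| = O(|E|)$, so Theorem~\ref{thm:k-path} computes $\calO_i$ in time $O(|E| + |E|\cdot\sqrt{\out_i})$. The boundary case $w_i = \epsilon$ (i.e.\ $k_i = 0$) instead contributes the diagonal $\{(v,v) \mid v\in V\}$, which is produced directly in $O(|V|)$ time.

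Finally I would output $\bigcup_{i\in[m]} \calO_i$, deduplicated with a hash set, at cost $O\!\left(\sum_{i\in[m]} \out_i\right) = O(m\cdot\out) = O(\out)$. Summing the per-word bounds and using $\out_i \le \out$ together with $m = O(1)$ gives total time $O\!\left(\sum_{i\in[m]}\left(|E| + |E|\sqrt{\out_i}\right)\right) = O\!\left(m|E| + m|E|\sqrt{\out}\right) = O(|E| + |E|\sqrt{\out})$, as claimed; note that Corollary~\ref{cor:k-path} is the special case $m = 1$. There is no substantive obstacle once finiteness of $L(Q)$ is observed, since all the work is inside Theorem~\ref{thm:k-path}, which we may assume. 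The only points requiring care are bookkeeping ones: $m$ and the $k_i$ are exponential in $|Q|$ in the worst case, which matters for combined but not data complexity; and one must not forget the $\epsilon$-word contribution when $\epsilon \in L(Q)$.
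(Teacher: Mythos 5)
Your proposal is correct and follows essentially the same route as the paper: the paper rewrites the Kleene-free expression into a union of constantly many concatenations of symbols (by distributing union over concatenation, equivalent to your observation that $L(Q)$ is finite) and applies the $k$-path result (Corollary~\ref{cor:k-path}) to each word, exactly as you do. Your extra bookkeeping (per-word output bounds, deduplication, the $\epsilon$ case) is fine but not needed beyond what the paper's argument already implies.
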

\begin{proof}[Proof of Corollary~\ref{cor:concatenation-or}]
By Definition~\ref{defn:kleene-free}, a Kleene-free regular expression $Q$
can only involve concatenation and union operators.
By distributing the union over concatenation,
we can convert $Q$ into a union of a number of regular expressions, each of which is a concatenation of symbols, i.e., has the form $a_1 a_2\ldots a_k$.
We apply Corollary~\ref{cor:k-path} to each of them.
\end{proof}

\subsection{The RPQ $a^*$}
Computing the RPQ $a^*$ is equivalent to computing the {\em transitive closure} of a graph.
Given a directed (but not necessarily edge-labeled) graph $G = (V, E)$, the transitive closure of
$G$ is the list of pairs of vertices $(v, u) \in V \times V$ such that there is a path from $v$ to $u$ in $G$. We use $\out$ to denote the number of pairs in the transitive closure.
We show below that the transitive closure, and by extension the RPQ $a^*$, admit a much
simpler algorithm that already achieves a complexity of 
$O(\min(\out\cdot\sqrt{|E|}, |V|\cdot|E|))$.
Note that this is the same complexity as the general \ospg algorithm from Section~\ref{sec:algo}.
This is because in transitive closure, $|E| \leq \out$, hence the term $|E|^{3/2}$
from Theorem~\ref{thm:complexity} is dominated by the term $\out\cdot\sqrt{|E|}$.

\begin{theorem}[Transitive closure]
    Given any directed graph $G = (V, E)$, the transitive closure of $G$ can be computed in time
    $O(\min(\out\cdot\sqrt{|E|}, |V|\cdot|E|))$.
    \label{thm:transitive-closure}
\end{theorem}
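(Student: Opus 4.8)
The plan is to show that the textbook algorithm already suffices and that all the difficulty sits in its analysis. The algorithm is: for every vertex $v\in V$, run one single-source traversal (breadth- or depth-first search) rooted at $v$, obtaining the set $N(v)$ of vertices reachable from $v$ (including $v$ itself via the length-$0$ path), and emit all pairs $(v,u)$ with $u\in N(v)$; equivalently, one may run the standard semi-naive fixpoint evaluation of the Datalog program for transitive closure, which has the same cost profile. The $O(|V|\cdot|E|)$ half of the bound is immediate, since each traversal runs in time $O(|E|)$, so the real content is the output-sensitive bound $O(\out\cdot\sqrt{|E|})$.

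First I would write the total cost of the $|V|$ traversals as $O\bigl(\out+\sum_{v\in V}\sum_{w\in N(v)}\mathrm{outdeg}(w)\bigr)$, where $\mathrm{outdeg}(w)$ denotes the number of out-edges of $w$: the traversal from $v$ visits each vertex of $N(v)$ once, these visits sum to $\sum_v|N(v)|=\out$, and it scans every out-edge of every vertex it reaches. Exchanging the order of summation, $\sum_{v\in V}\sum_{w\in N(v)}\mathrm{outdeg}(w)=\sum_{w\in V}\mathrm{outdeg}(w)\cdot a(w)$, where $a(w)\defeq|\{v\in V: w\in N(v)\}|$ is the number of vertices that can reach $w$. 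Note that $\sum_{w\in V}a(w)=\out$, since both $\sum_w a(w)$ and $\sum_v|N(v)|$ count the same set of reachable ordered pairs.

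The heart of the argument is to bound $\sum_{w\in V}\mathrm{outdeg}(w)\cdot a(w)$ by $O(\out\cdot\sqrt{|E|})$ through a degree-aware case split in the spirit of Algorithm~\ref{algo:ab*c}: call $w$ \emph{heavy} if $|N(w)|>\sqrt{|E|}$ and \emph{light} otherwise. For light $w$, since $G$ is simple we have $\mathrm{outdeg}(w)\le|N(w)|\le\sqrt{|E|}$, so $\sum_{w\text{ light}}\mathrm{outdeg}(w)\,a(w)\le\sqrt{|E|}\sum_{w\in V}a(w)=\out\cdot\sqrt{|E|}$. For heavy $w$, two observations do the work: (i) if $v$ can reach $w$ then $N(v)\supseteq N(w)$, so $v$ is heavy as well, i.e.\ heaviness propagates backward along reachability; and (ii) there are fewer than $\out/\sqrt{|E|}$ heavy vertices, because each heavy $v$ contributes more than $\sqrt{|E|}$ output pairs with first coordinate $v$ and these pair sets are pairwise disjoint. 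Together (i) and (ii) give $a(w)<\out/\sqrt{|E|}$ for every heavy $w$, hence $\sum_{w\text{ heavy}}\mathrm{outdeg}(w)\,a(w)<\bigl(\out/\sqrt{|E|}\bigr)\sum_{w\in V}\mathrm{outdeg}(w)=\bigl(\out/\sqrt{|E|}\bigr)\cdot|E|=\out\cdot\sqrt{|E|}$. Adding the two cases together with the $O(\out)$ visiting term and the $O(|V|+|E|)$ cost of reading the input — both absorbed because $|V|\le\out$ (every reflexive pair is in the output) and $|E|\le\out$ ($E$ injects into the output for a simple graph) — yields $O(\out\cdot\sqrt{|E|})$; combined with the trivial $O(|V|\cdot|E|)$ bound this is the theorem.

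I expect the charging argument of the last paragraph to be the main obstacle, and this is presumably what the excerpt means by calling the analysis ``quite involved''. The subtle points are: realizing that it is $\mathrm{outdeg}(w)$, not $|N(w)|$, that governs the traversal cost; that it is nonetheless controlled by $|N(w)|$ in the light regime (so the $\sqrt{|E|}$ threshold is the right one); and that in the heavy regime the ancestor count $a(w)$ is pinned to the output-bounded number of heavy vertices via the monotonicity $N(v)\supseteq N(w)$. A smaller but necessary care point is the handling of reflexive pairs, which is what makes $\sum_w a(w)=\out$ and $|V|\le\out$ clean; this is licensed by Definition~\ref{defn:tau-reachable} since length-$0$ paths are permitted.
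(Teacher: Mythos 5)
Your proof is correct, but it takes a genuinely different route from the paper. The paper keeps the semi-na\"ive evaluation of the linear Datalog rule $T(X,Y) = T(X,Z)\land E(Z,Y)$, insists on a particular evaluation strategy (GenericJoin/LeapFrogTrieJoin with variable order $(X,Z,Y)$), and bounds the total work $\sum_{(x,z)\in T}|\sigma_{Z=z}E(Z,Y)|$ by first using transitivity to insert the factor $|\sigma_{X=x}T(X,Y)|^{1/2}$ and then invoking the Query Decomposition Lemma with the fractional edge cover $(\tfrac12,\tfrac12,\tfrac12)$ of $T(X,Z)\land E(Z,Y)\land T(X,Y)$, giving $|T|^{1/2}|E|^{1/2}|T|^{1/2}=\out\cdot\sqrt{|E|}$; the $|V|\cdot|E|$ bound is obtained separately. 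You instead run an all-sources traversal and bound the same underlying quantity, $\sum_w \mathrm{outdeg}(w)\cdot a(w)$, by an explicit heavy/light split at threshold $\sqrt{|E|}$ on $|N(w)|$, using $\mathrm{outdeg}(w)\le|N(w)|$ in the light case and, in the heavy case, the monotonicity $N(v)\supseteq N(w)$ for $v$ reaching $w$ together with the count of at most $\out/\sqrt{|E|}$ heavy vertices --- the same monotonicity that the paper uses for correctness of Algorithm~\ref{algo:ab*c} and, implicitly, in Inequality~\eqref{eq:tc:closure}. Your argument is more elementary and self-contained (no worst-case-optimal join or AGM machinery, no care needed about the join's variable order or the $\neg T^{i-1}$ lookup), and it makes the output-sensitivity mechanism transparent; the paper's version buys a statement directly about semi-na\"ive Datalog evaluation (matching its implementation-oriented framing and the appendix comparison with the binary formulation) and an analysis that needs no explicit degree split, so it transfers more readily to other recursive rules. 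One cosmetic caveat: your absorption of the $O(|V|+|E|)$ term via ``reflexive pairs are in the output'' assumes the reflexive-transitive closure, whereas the paper's program computes paths of length at least one; this is harmless because, under the paper's standing convention that isolated vertices are dropped ($|V|=O(|E|)$) and since $E\subseteq T$ gives $|E|\le\out$, the additive term is $O(\out)$ either way.
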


\begin{corollary}[Theorem~\ref{thm:transitive-closure}]
    Given any edge-labeled graph $G = (V, E, \Sigma \defeq \{a\})$, 
    the RPQ $a^*$ can be evaluated over $G$ in time
    $O(\min(\out\cdot\sqrt{|E|}, |V|\cdot|E|))$.
    \label{cor:transitive-closure}
\end{corollary}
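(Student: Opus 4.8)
The plan is to prove the substantive statement, Theorem~\ref{thm:transitive-closure}; the corollary is then immediate, since an edge-labeled graph over the singleton alphabet $\{a\}$ is just a directed graph $(V,E)$, and a vertex $u$ is $(a^*)$-reachable from $v$ exactly when there is a directed path from $v$ to $u$ (including the length-$0$ path, so the output is the reflexive-transitive closure), with the same value of $\out$ in both formulations. So I would spend all the effort on Theorem~\ref{thm:transitive-closure}.

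For the theorem I would use a stripped-down version of Algorithm~\ref{algo:ab*c}: here there is no need for the reduction to $ab^*c$, nor for the relations $Q_\ell, Q_h$, or $T$. First, compute a capped reachability relation $R(X,Y)$ holding pairs where $Y$ is reachable from $X$ in $G$, but keeping, for every $X$, at most $\Delta \defeq \sqrt{|E|}+1$ distinct $Y$ values --- seed $R(X,X)$ for all $X$ and close under the rule $R(X,Y) \leftarrow E(X,Z)\wedge R(Z,Y)$. This is computed exactly as in the proof of Theorem~\ref{thm:complexity}: maintain a list $\mylist(X)$ per vertex, and whenever a \emph{new} $Y$ enters $\mylist(X)$, propagate it (if room remains) to every in-neighbour $W$ of $X$. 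Each directed edge $(W,X)$ is processed at most once per $Y\in\mylist(X)$, so the total time is $O(|E|\cdot\Delta)=O(|E|^{3/2})$, and termination is guaranteed by the cap. Call $X$ \emph{heavy} if its list reaches size $\Delta$ (equivalently, more than $\sqrt{|E|}$ vertices are reachable from $X$) and \emph{light} otherwise. A key structural observation, to be verified carefully, is that if $X$ is light then every vertex on any path leaving $X$ is also light, because its reachability set is contained in that of $X$; hence along such paths the cap is never an obstacle, and for light $X$ the list $\mylist(X)$ equals the exact set of vertices reachable from $X$, which can be emitted directly.

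It remains to handle heavy vertices. I would run a single graph search (BFS or DFS) from each heavy $X$ in $G$, using one timestamped ``visited'' array reused across all searches ($O(|V|)$ once, then $O(|E|)$ per search), and emit the discovered pairs $(X,Y)$. The crucial counting step is that the number of heavy vertices is at most $\min(\out/\sqrt{|E|},\,|V|)$: each heavy $X$ contributes more than $\sqrt{|E|}$ distinct output pairs with first coordinate $X$, and these pair sets are pairwise disjoint, so fewer than $\out/\sqrt{|E|}$ vertices are heavy; the bound by $|V|$ is trivial. Thus the search phase costs $O\!\big(\min(\out/\sqrt{|E|},|V|)\cdot|E|\big) = O(\min(\out\cdot\sqrt{|E|},\,|V|\cdot|E|))$, and the total number of output pairs written over all light and heavy $X$ is $\sum_X |N(X)| = \out$. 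To fold the $O(|E|^{3/2})$ preprocessing term into the stated bound, I would use the fact specific to transitive closure that every edge $(v,u)$ is itself an output pair, so $|E|\le\out$ and hence $|E|^{3/2}\le\out\cdot\sqrt{|E|}$; likewise $|E|\le|V|^2$ gives $|E|^{3/2}\le|V|\cdot|E|$, and the $O(|V|+|E|)$ cost of reading the input and array initialisation is dominated because $\out\ge\max(|V|,|E|)$. Note the algorithm never needs to know $\out$ in advance: $\Delta$ depends only on $|E|$, and the light/heavy split is decided on the fly, so the runtime adapts to $\min(\out\cdot\sqrt{|E|},|V|\cdot|E|)$ automatically --- exactly as in the general \ospg algorithm.

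The main obstacle I expect is the analysis rather than the algorithm (as the paper warns). The delicate points are: arguing that the capped fixpoint both terminates and, for every light $X$, correctly recovers the full reachability set despite cycles (this rests on the ``light is closed under reachability'' observation above); pinning down the heavy-vertex count as $\min(\out/\sqrt{|E|},|V|)$ rather than something weaker; and checking that all lower-order terms --- $O(|E|^{3/2})$, $O(|V|+|E|)$, and the $O(\out)$ for writing output --- are absorbed by $O(\min(\out\cdot\sqrt{|E|},|V|\cdot|E|))$ using $|E|\le\out$ and $|V|\le\out$, these inequalities being precisely what makes transitive closure ``nicer'' than a general RPQ. A small case to dispatch separately is $|E|=0$, where the output is the $|V|$ reflexive pairs and the whole computation takes $O(|V|)$ time.
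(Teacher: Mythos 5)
Your proposal is correct, but it does not follow the paper's own route to this corollary. The paper derives it from Theorem~\ref{thm:transitive-closure}, whose proof is deliberately \emph{not} a specialization of \ospg: it runs the completely standard semi-na\"ive evaluation of the linear Datalog program for transitive closure (Algorithm~\ref{algo:tc}), and the whole content lies in the analysis --- the deltas $\delta T^i$ partition the output, the step $T(x,z)\wedge E(z,y)\Rightarrow T(x,y)$ lets one insert a factor $|\sigma_{X=x}T(X,Y)|^{1/2}$, and the Query Decomposition Lemma with fractional edge cover $(\tfrac12,\tfrac12,\tfrac12)$ on $T(X,Z)\wedge E(Z,Y)\wedge T(X,Y)$ yields $\out\cdot\sqrt{|E|}$, with $|V|\cdot|E|$ obtained by a cruder sum. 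You instead re-instantiate the \ospg light/heavy machinery for $a^*$ directly: capped reachability lists of size $\Delta=\sqrt{|E|}+1$, exact output for light vertices, BFS from the at most $\out/\sqrt{|E|}$ heavy vertices, and absorption of the $O(|E|^{3/2})$ preprocessing via $|E|\le\out$ --- which is essentially deriving the corollary from Theorem~\ref{thm:complexity} plus the observation (also made by the paper just before Theorem~\ref{thm:transitive-closure}) that $|E|\le\out$ for transitive closure. Both arguments are sound; the paper's buys the message that the \emph{standard} semi-na\"ive algorithm already meets the output-sensitive bound (the algorithmic simplicity being the point, with the AGM-style analysis carrying the burden), while yours avoids the query-decomposition machinery entirely in favor of elementary counting, and in passing supplies a correctness detail the paper only asserts in Section~\ref{subsec:algo:ab*c}, namely that the capped fixpoint recovers the full reachability set of every light vertex (via your ``light is closed under reachability'' observation). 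Your attention to the reflexive pairs contributed by $\epsilon\in L(a^*)$ and to the degenerate case $|E|=0$ is also more careful than the paper, which silently identifies the $a^*$ output with the (irreflexive) transitive closure; these are cosmetic points and do not affect correctness.
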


Our algorithm for transitive closure that meets the complexity from Theorem~\ref{thm:transitive-closure} is given in Algorithm~\ref{algo:tc}.
This algorithm is basically the standard semi-na\"ive evaluation algorithm for the following Datalog program that defines the transitive closure:
\begin{align}
    T(X, Y)&= E(X, Y)\nonumber\\
    T(X, Y)&= T(X, Z) \land E(Z, Y)\label{eq:linear:tc}
\end{align}
In particular, this semi-na\"ive algorithm iteratively computes a sequence of relations $\delta T^{i}, T^{i}$ for $i = 0, 1, \ldots$ until a fixed point is reached, i.e.~until we encounter $\delta T^i = \emptyset$.
However, the trick lies in the {\em evaluation strategy} for Eq.~\eqref{eq:delta:Ti}.
In particular, we show below that if we evaluate this rule using the right strategy, we can achieve the desired complexity from Theorem~\ref{thm:transitive-closure}.
In Appendix~\ref{app:binary-tc}, we will also consider a different Datalog program for transitive closure and analyze its complexity under semi-na\"ive evaluation.

\begin{algorithm}[th!]
    \caption{Output-sensitive algorithm for transitive closure}
    \begin{algorithmic}[1]
        \Statex{\textbf{Input}}: A directed graph $G=(V,E)$.
        \Statex{\textbf{Output}}: The list of pairs $(v, u) \in V \times V$ where there is a path from $v$ to $u$.
        \Statex{}
        \State{Initialize}
        \begin{align}
            \delta T^0(X, Y) &= E(X, Y)\label{eq:delta:T0}\\
            T^0(X, Y) &= \delta T^0(X, Y)\label{eq:T0}
        \end{align}
        \State{$i \gets 0$}
        \Repeat
        \State{$i \gets i + 1$}
        \State{\change{Evaluate $\delta T^i$ below by iterating over $(x, z) \in \delta T^{i-1}$,
        then over neighbors $y$ of $z$, then checking that $\neg T^{i-1}(x, y)$ holds,
        and if so, adding $(x, y)$ to $\delta T^i$:}}
        \begin{align}
            \delta T^{i}(X, Y) &= \delta T^{i-1}(X, Z) \land E(Z, Y) \land \neg T^{i-1}(X, Y)
            \label{eq:delta:Ti}\\
            T^{i}(X, Y) &= T^{i-1}(X, Y) \lor \delta T^{i}(X, Y)\label{eq:Ti}
        \end{align}
        \Until{$\delta T^{i} = \emptyset$}
        \State \Return $T^i$
    \end{algorithmic}
    \label{algo:tc}
\end{algorithm}

\begin{proof}[Proof of Theorem~\ref{thm:transitive-closure}]
    Consider Algorithm~\ref{algo:tc}.
    Let $I$ be the last iteration, i.e.~the smallest $i$ such that $\delta T^i = \emptyset$.
    Note that $T^I$ is the final output, i.e.~$T^I\defeq T$.
    Moreover notice that by definition, the relations $\delta^0 T, \ldots, \delta^I T$
    form a partition of the output $T$.
    This is because Eq.~\eqref{eq:delta:Ti} ensures that every $\delta^i T$ is disjoint from all the previous $\delta^0 T, \ldots, \delta^{i-1} T$, thanks to the negation $\neg T^{i-1}(X, Y)$.
    \change{For a fixed $i$, we evaluate the rule in Eq.~\eqref{eq:delta:Ti} by going through
    all tuples $(x, z)$ in $\delta^{i-1} T$, and for each one of them, we go over all neighbors $y$ of $z$, then check whether the negation
    $\neg T^{i-1}(x, y)$ holds, and if so, we add $(x, y)$ to $\delta T^i$.
    \footnote{\change{This evaluation strategy is equivalent to using LeapFrogTrieJoin~\cite{LeapFrogTrieJoin2014} or GenericJoin~\cite{SkewStrikesBack2014} to compute the join $\delta T^{i-1}(X, Z) \land E(Z, Y)$ with the variable order $(X, Z, Y)$, and then filtering the result using the negation $\neg T^{i-1}(X, Y)$.}}}
    Therefore, the total time needed to evaluate the rule in Eq.~\eqref{eq:delta:Ti} over all
    iterations $i \in [I]$ is:
    \begin{align}
        \sum_{i\in[I]} \sum_{(x, z)\in \delta T^{i-1}}|\sigma_{Z=z}E(Z, Y)|&\change{=}
        \sum_{(x, z)\in T}
        |\sigma_{Z=z}E(Z, Y)|\label{eq:tc:partition}\\
        &\change{=
        \sum_{(x, z)\in T}
        |\sigma_{Z=z}E(Z, Y)|^{1/2}\cdot |\sigma_{Z=z}E(Z, Y)|^{1/2}}\nonumber\\
        &\leq\sum_{(x, z)\in T}
        |\sigma_{Z=z}E(Z, Y)|^{1/2} \cdot
        |\sigma_{X=x}T(X, Y)|^{1/2}\label{eq:tc:closure}\\
        &=\sum_{\change{(x, z)\in V^2}}
        |\sigma_{X=x, Z=z}T(X, Z)|^{1/2}\cdot
        |\sigma_{Z=z}E(Z, Y)|^{1/2} \cdot
        |\sigma_{X=x}T(X, Y)|^{1/2}\nonumber\\
        &\leq|T(X, Z)|^{1/2}\cdot |E(Z, Y)|^{1/2} \cdot |T(X, Y)|^{1/2}&\label{eq:tc:query-decomposition}\\
        &=\out\cdot\sqrt{|E|}\nonumber
    \end{align}
    \change{Equality~\eqref{eq:tc:partition} holds because the relations $\delta^0 T, \ldots, \delta^I T$ form a {\em partition} of the output $T$, and $\delta^I T=\emptyset$.}
    Inequality~\eqref{eq:tc:closure} holds because by definition of transitive closure,
    for every tuple $(x, z, y)$ where $T(x, z)$ and $E(z, y)$ are true, $T(x, y)$ is also true.
    \change{This implies that for every $(x, z)\in T$, we have 
    $\pi_Y\left(\sigma_{Z=z}E(Z, Y)\right) \subseteq \pi_Y\left(\sigma_{X=x}T(X, Y)\right)$, hence
    $|\sigma_{Z=z}E(Z, Y)| \leq |\sigma_{X=x}T(X, Y)|$.}
    Inequality~\eqref{eq:tc:query-decomposition} holds by the query decomposition lemma, which we re-state below
    for completeness:
    \begin{lemma}[Query Decomposition Lemma~\cite{SkewStrikesBack2014,WCOJGemsOfPODS2018}]
        Let $Q$ be a conjunctive query over a set of variables $\vars(Q)$ and a set of atoms $\atoms(Q)$.
        Given $\bm Y \subseteq \vars(Q)$, let $\left(\lambda_{R(\bm X)}\right)_{R(\bm X)\in\atoms(Q)}$ be a {\em fractional edge cover of} $Q$, i.e.~a set of non-negative weights such that for every $Z\in\vars(Q)$, we have $\sum_{{R(\bm X)\in\, \atoms(Q) \text{ s.t. } Z \in \bm X}} \lambda_{R(\bm X)}\geq 1$.
        Then, the following inequality holds:
        \begin{equation}
            \sum_{\bm y\in \Dom(\bm Y)}
            \underbrace{\prod_{R(\bm X) \in \atoms(Q)}
            |R(\bm X)\ltimes \bm y|^{\lambda_{R(\bm X)}}
            }_{\text{AGM-bound of $Q\ltimes \bm y$}}
            \quad\leq\quad
            \underbrace{\prod_{R(\bm X) \in \atoms(Q)}
            |R|^{\lambda_{R(\bm X)}}}_{\text{AGM-bound of $Q$}}
            \label{eq:query-decom-lemma}
        \end{equation}
        \label{lmm:query-decom-lemma}
    \end{lemma}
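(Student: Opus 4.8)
\textbf{Proof proposal for Lemma~\ref{lmm:query-decom-lemma}.}
The plan is to obtain the lemma as a consequence of Friedgut's (generalized H\"older / entropy) inequality, but applied to the sub-hypergraph of $Q$ induced on the variables $\bm Y$ rather than to $Q$ itself. First I would record that the given weights already cover $\bm Y$: for each atom $R(\bm X) \in \atoms(Q)$ set $\bm X^{\bm Y} \defeq \bm X \cap \bm Y$; since every $v \in \bm Y$ also lies in $\vars(Q)$ and $v \in \bm X^{\bm Y}$ exactly when $v \in \bm X$, the fractional-edge-cover condition $\sum_{R(\bm X)\,:\,v\in\bm X}\lambda_{R(\bm X)}\ge 1$ shows at once that $(\lambda_{R(\bm X)})_{R(\bm X)\in\atoms(Q)}$ is a fractional edge cover of the hypergraph with vertex set $\bm Y$ and hyperedges $\{\bm X^{\bm Y}\}_{R(\bm X)\in\atoms(Q)}$.

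Next I would rewrite each semijoin size as a function of the $\bm Y$-coordinates only. For each atom $R(\bm X)$ define $f_{R(\bm X)}\colon \Dom(\bm X^{\bm Y})\to\{0,1,2,\dots\}$ by letting $f_{R(\bm X)}(\bm w) \defeq |R(\bm X)\ltimes\bm w|$, the number of tuples of $R$ whose projection onto $\bm X^{\bm Y}$ equals $\bm w$. Because the semijoin $R(\bm X)\ltimes\bm y$ constrains only those variables of $\bm X$ that lie in $\bm Y$, we have $|R(\bm X)\ltimes\bm y| = f_{R(\bm X)}(\pi_{\bm X^{\bm Y}}\bm y)$ for every $\bm y\in\Dom(\bm Y)$, so the left-hand side of \eqref{eq:query-decom-lemma} equals $\sum_{\bm y\in\Dom(\bm Y)}\prod_{R(\bm X)\in\atoms(Q)} f_{R(\bm X)}(\pi_{\bm X^{\bm Y}}\bm y)^{\lambda_{R(\bm X)}}$.

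Now I would apply Friedgut's inequality in the form: if $(\lambda_e)_e$ is a fractional edge cover of a hypergraph and $(f_e)_e$ are nonnegative functions, then $\sum_{\bm y}\prod_e f_e(\bm y_e)^{\lambda_e}\le\prod_e\bigl(\sum_{\bm y_e} f_e(\bm y_e)\bigr)^{\lambda_e}$ (this is exactly the inequality behind the AGM bound; cf.~\cite{SkewStrikesBack2014,WCOJGemsOfPODS2018}). Applied to the hypergraph on $\bm Y$, it bounds the sum above by $\prod_{R(\bm X)}\bigl(\sum_{\bm w\in\Dom(\bm X^{\bm Y})} f_{R(\bm X)}(\bm w)\bigr)^{\lambda_{R(\bm X)}}$. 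Since the sets $R(\bm X)\ltimes\bm w$, as $\bm w$ ranges over $\Dom(\bm X^{\bm Y})$, partition $R(\bm X)$ (each tuple of $R$ has a unique projection onto $\bm X^{\bm Y}$), the inner sum is simply $|R(\bm X)| = |R|$, and the bound becomes $\prod_{R(\bm X)\in\atoms(Q)}|R|^{\lambda_{R(\bm X)}}$, which is the AGM bound of $Q$, as claimed.

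The only ingredient that is not pure bookkeeping is Friedgut's inequality itself, so I would expect that to be the main obstacle if one wants the argument self-contained. To prove it I would induct on $|\bm Y|$: pick a variable $v\in\bm Y$, split the atoms into those containing $v$ and those not, pull the latter out of the innermost sum over $\Dom(v)$, and bound $\sum_{d}\prod_{R\ni v}|R\ltimes(\bm y',d)|^{\lambda_R}$ by $\prod_{R\ni v}\bigl(\sum_d |R\ltimes(\bm y',d)|\bigr)^{\lambda_R}$ using weighted H\"older together with $\ell^s$-norm monotonicity ($\|\cdot\|_s\le\|\cdot\|_1$ for $s=\sum_{R\ni v}\lambda_R\ge 1$); the remaining sum over $d$ collapses to $|R\ltimes\bm y'|$, and the induction hypothesis (applied with $\bm Y\setminus\{v\}$, still covered by the same weights) finishes the step, the base case $\bm Y=\emptyset$ being an equality. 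Handling the case $\sum_{R\ni v}\lambda_R>1$ correctly is the one subtle point; everything else reduces to manipulating projections and semijoins.
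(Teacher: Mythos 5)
The paper does not prove this lemma itself --- it restates it ``for completeness'' with citations to~\cite{SkewStrikesBack2014,WCOJGemsOfPODS2018} --- so the only benchmark is the standard proof in those references, and your argument is that proof: correct and essentially identical in approach. Reducing each semijoin size $|R(\bm X)\ltimes \bm y|$ to a function of $\pi_{\bm X\cap\bm Y}\bm y$, noting the given weights restrict to a fractional edge cover of the induced hypergraph on $\bm Y$, applying Friedgut's generalized H\"older inequality there, and collapsing each marginal sum to $|R|$ is exactly the canonical derivation, and your inductive proof of Friedgut (weighted H\"older with exponents $s/\lambda_e$ plus $\ell^s$-versus-$\ell^1$ norm monotonicity for $s=\sum_{e\ni v}\lambda_e\ge 1$) is the standard one, with only routine bookkeeping left implicit (atoms with $\bm X\cap\bm Y=\emptyset$ or $\lambda_{R(\bm X)}=0$ factor out of both sides harmlessly).
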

    In the above, $\bm y\in \Dom(\bm Y)$ indicates that the tuple $\bm y$ has schema $\bm Y$. Moreover,
    $R(\bm X)\ltimes \bm y$ denotes the {\em semijoin} of the atom
    $R(\bm X)$ with the tuple $\bm y$.

    In particular, in Inequality~\eqref{eq:tc:query-decomposition}, we apply the query decomposition lemma to the query $T(X, Z) \land E(Z, Y) \land T(X, Y)$ with the fractional edge cover $\left(\frac{1}{2}, \frac{1}{2}, \frac{1}{2}\right)$.
    Inequality~\eqref{eq:tc:query-decomposition} bounds the total time needed to evaluate the rule in Eq.~\eqref{eq:delta:Ti} over all $i \in [I]$ by $O(\out\cdot\sqrt{|E|})$.
    In addition, we can also bound the same time by $O(|V|\cdot|E|)$ as follows:
    \begin{align*}
        \sum_{i\in[I]} \sum_{(x, z)\in \delta T^{i-1}}|\sigma_{Z=z}E(Z, Y)|&\change{=}
        \sum_{(x, z)\in T}
        |\sigma_{Z=z}E(Z, Y)|\\
        &\leq\sum_{x\in \pi_XT} \sum_{\change{z\in V}} |\sigma_{Z=z}E(Z, Y)|\\
        &\leq\sum_{x\in \pi_XT} |E|\\
        &\leq |V|\cdot |E|
    \end{align*}

    \change{Eq.~\eqref{eq:delta:T0}} can be evaluated in time $O(|E|)$, which is $O(\out)$ since the output $T$ of transitive closure is a superset of the input graph.
    The evaluation time for Eq.~\eqref{eq:T0} and~\eqref{eq:Ti}
    is dominated by the evaluation time for Eq.~\eqref{eq:delta:T0} and~\eqref{eq:delta:Ti}.
\end{proof}

\section{Related Work}
\label{sec:related}

\paragraph{RPQ Evaluation}
Recent work gives an excellent treatment of the RPQ evaluation problem investigated in this paper~\cite{CaselS23}: It overviews the \pg algorithm and gives conditional lower bounds for the problem (discussed in the introduction).

There is extensive literature on the asymptotic complexity of variants of the decision problem associated with the RPQ evaluation problem that we study in this paper: Given an edge-labeled directed data graph, an RPQ $Q$ and two vertices $v$ and $u$ in the data graph, decide whether there is a path from $v$ to $u$ that matches $Q$.
This problem can be solved in polynomial time combined complexity, i.e., when the sizes of both the graph and the query are part of the input~\cite{BarceloLLW12}. 
The problem has NL-complete data complexity~\cite{Baeza13}.
The enumeration of arbitrary or shortest paths between $v$ and $u$ that match the given query admits  polynomial delay in data complexity~\cite{MartensT19}. In case the queries are expressed as SPARQL property path expressions under the W3C semantics, the problem is NP-complete even under data complexity
~\cite{LosemannM13, MendelzonW95}.

Two widely studied variants of the decision problem associated with the RPQ evaluation  problem consider whether the given vertices are connected by a {\em simple path}, i.e., a path in which  each vertex occurs at most once~\cite{MendelzonW95, MartensT19, BaganBG20}
or by a {\em trail}, i.e., a path in which every edge occurs at most once~\cite{MartensT19, MartensNP23}.  
In case of simple paths, the problem is NP-complete even for fixed basic queries such as $(aa)^*$ or $a^*ba^*$~\cite{MendelzonW95}. 
Depending on the structure of the regular expression 
and the corresponding automaton, 
the data complexity of the problem is either in  $AC^0$, NL-complete, or NP-complete~\cite{BaganBG20}. 
A similar classification is established in case of trails, where the  tractable class is shown to be larger than in case of simple paths~\cite{MartensNP23}. 
For both the simple path and the trail semantics, the literature characterizes fragments of regular path queries for which the problem is 
fixed-parameter tractable, with the query size as the  parameter~\cite{MartensT19}.


A common extension of RPQs is given by {\em conjunctive} regular path queries, which is a conjunctive query over binary atoms defined by regular path queries. The combined and data complexity of the above decision problem becomes NP-complete and respectively remains NL-complete for conjunctive RPQs~\cite{BarceloLLW12}. The complexities do not increase when we consider the so-called {\em injective} semantics, which naturally generalizes the simple-path semantics for the class of conjunctive RPQs~\cite{FigueiraR23}.
The worst-case optimality of conjunctive regular path queries has been recently investigated~\cite{DBLP:conf/icdt/CucumidesRV23}, where the cardinalities of the regular path queries present in a conjunctive regular path query are assumed to be known.
Extensions of conjunctive regular path queries were also investigated: capture variables~\cite{Schmid22}, 
output paths and relationships among them~\cite{BarceloLLW12},
and combined data and topology querying on graph databases~\cite{LibkinMV16, FreydenbergerS13}.
Further work studies the complexity of checking containment for conjunctive regular path queries   \cite{CalvaneseGLV00, ReutterRV17, GluchMO19, BarceloF019, FigueiraGKMNT20, Figueira20, FigueiraM23} and of measuring the  contribution of edges and vertices to query  answers via the Shapley value~\cite{KhalilK23}.

To represent the set of possible paths matching a regular path query, a succinct and lossless representation has been introduced~\cite{MartensNPRVV23} and recently connected~\cite{DBLP:conf/icdt/KimelfeldMN25} to factorized representations~\cite{DBLP:journals/tods/OlteanuZ15} of the output of conjunctive queries.
The representation can be computed in linear time combined complexity and allows to perform 
operations like enumeration, counting,
random sampling, grouping, and taking unions, which are common operations 
on the tabular representation of all paths.
An in-depth analytical study has been made on SPARQL queries formulated by end users on top of graph-structured data in order to explain 
why regular path queries in graph database applications behave better than worst-case complexity results suggest~\cite{BonifatiMT20, MartensT19b}.

In knowledge representation, work examines the complexity of two-way conjunctive regular path queries over knowledge bases expressed
by means of linear existential rules~\cite{BienvenuT16}, lightweight descriptive logic~\cite{BienvenuOS15}, or guarded existential rules~\cite{BagetBMT17}.

\paragraph{Output-sensitive query evaluation.}
Our output-sensitive query evaluation algorithm \ospg follows a well-established line of work on output-sensitive algorithms, which we overview next.

A growing line of work decomposes the query processing task into a pre-processing phase, where a data structure representing succinctly the query output is constructed, and an enumeration phase, where each tuple in the query output is enumerated with some delay $d$. The overall data complexity is expressed as $O(N^w + d\cdot \out)$, where $N$ is the size of the input database, $\out$ is the output size, and $w$ is a width parameter that depends on the query structure. A large number of results can be explained in this complexity framework, as exemplified next.

We first consider the case of $\alpha$-acyclic conjunctive queries. The classical Yannakakis algorithm, developed more than four decades ago, is the first output-sensitive evaluation algorithm for acyclic queries: It runs in time  $O(N + N\cdot \out)$ data complexity~\cite{Yannakakis81}, i.e., for $w=1$ and $d=O(N)$~\cite{BaganDG07}. For free-connex $\alpha$-acyclic conjunctive queries, $d$ decreases to $O(1)$ while $w$ remains 1~\cite{BaganDG07}. For hierarchical queries, which are a strict subset of the class of $\alpha$-acyclic conjunctive queries,  $w = 1+ (\omega -1)\epsilon$ and $d = O(N^{1 - \epsilon})$, where $\omega$ is the fractional hypertree width of the query (adapted from Boolean to conjunctive queries with arbitrary free variables) and $\epsilon\in[0,1]$ can be chosen arbitrarily~\cite{KaraNOZ23}. There are similar trade-offs between $w$ and $d$ for general $\alpha$-acyclic queries~\cite{KaraNOZ23_acyclic} and queries with access patterns~\cite{DBLP:conf/icdt/00020OZ23, DBLP:conf/pods/ZhaoDK23}. There are recent polynomial-time factor improvements on the Yannakakis algorithm~\cite{DBLP:conf/sigmod/DeepHK20,DBLP:journals/pacmmod/Hu24,2024arXiv240605536H,DBLP:journals/pacmmod/DeepZFK24}, following an initial observation by Amossen and Pagh~\cite{DBLP:conf/icdt/AmossenP09} on the output-sensitive evaluation of a two-relation join-project query that encodes the multiplication of two sparse matrices.
Any $\alpha$-acyclic conjunctive query can be evaluated  in time $O(N + \out + N\cdot\out^{5/6})$ using a non-combinatorial algorithm (using fast matrix multiplication)~\cite{DBLP:journals/pacmmod/Hu24} and in time $O(N + \out + N\cdot\out^{1-\epsilon})$ using a combinatorial algorithm~\cite{DBLP:journals/pacmmod/DeepZFK24}, where $\epsilon\in (0,1]$ is a query-dependent constant. These complexities also fit the general complexity framework by taking $w=1$ and $d=O(1+ N/\out^{1/6})$ and respectively $d=O(1+ N/\out^{1/k})$. The $k$-path query discussed in Section~\ref{sec:special} can be evaluated in time $O(N + N\cdot \out^{1/2})$~\cite{2024arXiv240605536H} or in time $O(N + \out + N\cdot\out^{1-1/k})$~\cite{DBLP:journals/pacmmod/DeepZFK24}. A lower bound of $\Omega(N + N\cdot\out^{1/2})$ is also known for this query~\cite{2024arXiv240605536H}. This line of work culminates with an output-sensitive semi-ring extension of the Yannakakis algorithm to achieve $O(N+\out + N\cdot\out^{1- 1/\textsf{outw}})$ data complexity\footnote{We ignore here a $\polylog(N)$ factor.} for any acyclic join-aggregate query, where \textsf{outw} is  the so-called out-width of the query~\cite{2024arXiv240605536H}; this complexity fits the  complexity framework with $w=1$ and $d=O(1+N/\out^{1/\textsf{outw}})$. This upper bound is accompanied by a matching lower bound (modulo a $\polylog(N)$ factor). 

Arbitrary conjunctive queries (so including cyclic queries) can be evaluated using the factorized databases framework~\cite{DBLP:journals/tods/OlteanuZ15} where the width $w$ is the fractional hypertree width\footnote{This  is generalized from Boolean to conjunctive queries; $w$ is denoted as $s^\uparrow$ in~\cite{DBLP:journals/tods/OlteanuZ15} and FAQ-width in~\cite{faq}.} and the delay $d=O(1)$. For conjunctive queries over the Boolean semiring, $w$ becomes the submodular width, which is smaller than the fractional hypertree width for many queries, and $d=O(\polylog(N))$ using the PANDA algorithm~\cite{panda:pods17, BerkholzS19}. For conjunctive queries over arbitrary semirings, $w$ becomes the sharp submodular width, which is sandwiched between the 
fractional hypertree width and the submodular width, and $d$ remains $O(\polylog(N))$ using the FAQ-AI framework~\cite{DBLP:journals/tods/KhamisCMNNOS20}.

A further line of work observes constraints on the enumeration order of the tuples in the query output. For free-connex $\alpha$-acyclic queries, the query output can be enumerated in random order with $w = 1$ and $d = O(\polylog(N))$~\cite{CarmeliZBCKS22}. An extensive line of work investigates 
the efficient ranked enumeration for full conjunctive queries~\cite{TziavelisAGRY20}, arbitrary conjunctive queries~\cite{DeepK21, CarmeliTGKR23}, theta-joins~\cite{TziavelisGR21}, and monadic second order 
(MSO) logic~\cite{DBLP:conf/icdt/BourhisGJR21, DBLP:conf/icdt/AmarilliBCM24}.
MSO queries over words can be evaluated with 
$w=1$ and $d= O(\log(N))$ such that the result is enumerated 
following scores assigned to output words defined using so-called
MSO cost functions. This result is generalized to MSO 
queries over trees~\cite{DBLP:conf/icdt/AmarilliBCM24}.
A subclass of free-connex $\alpha$-acyclic queries, called queries 
without a {\em disruptive trio}\footnote{The negation of the notion of "disruptive trio" is equivalent to an earlier characterization of so-called free-top variable orders that allow for efficient enumeration in a given order in the context of factorized databases~\cite{DBLP:journals/pvldb/BakibayevKOZ13,KaraNOZ23}.}, admit ranked enumeration following a lexicographic order using $w=1$ (modulo a $\log(N)$ factor) and $d = O(\log(N))$~\cite{CarmeliTGKR23}. Using the same parameters $w$ and $d$, ranked enumeration following any ranking function that can be interpreted as a selective dioids can be supported for full $\alpha$-acyclic conjunctive queries with inequality conditions~\cite{TziavelisGR21}.

Output-sensitive algorithms have also been studied in the context of the Massively Parallel Computation (MPC) model~\cite{BeameKS14,HuYT19}, including 
an MPC adaptation of the Yannakakis algorithm~\cite{AfratiJRSU17}.
The literature provides an almost complete characterization of 
the $\alpha$-acyclic joins with respect to instance optimality and output optimality
in the MPC model~\cite{Hu019}: Instance optimality can be obtained for a strict subclass of $\alpha$-acyclic queries called r-hierarchical~\cite{Hu019}.

\paragraph{\change{Output-sensitive transitive closure.}}\change{
    Let $G=(V, E)$ be a directed graph.
    Using fast matrix multiplication,
    it is possible to compute the transitive closure of $G$ in time
    $\tilde O(|V|^{\omega})$~\cite{kozen92}, where $\omega$ is the {\em matrix multiplication exponent}, i.e.
    the smallest exponent $\alpha$ that allows us to multiply two $N\times N$ matrices
    in time $\tilde O(N^\alpha)$, and $\tilde O$ hides a factor of $\polylog(N)$.
    Currently, the best known upper bound for $\omega$ is $2.371552$~\cite{DBLP:conf/soda/WilliamsXXZ24}.
    An output-sensitive extension of this result computes the transitive closure of $G$ in time $\tilde O(|V|^{\frac{\omega+1}{4}}\cdot\out)$ \cite{BORASSI201651},
    which translates to $\tilde O(|V|^{0.842888}\cdot \out)$
    using the best known $\omega$, where $\out$ is the number of edges in the transitive closure.
    Without using fast matrix multiplication,~\cite{10.1145/2745754.2745779} presents a combinatorial algorithm for transitive closure that runs in time $O(\out^{3/2})$.
    Using fast matrix multiplication,~\cite{doi:10.1137/1.9781611977912.167} improves this
    bound to $\tilde O(\out^{1.3459})$, under the best known $\omega$.
    Note that in transitive closure, the input graph is always a subset of the output,
    hence $|E| \leq \out$.
    Therefore, our $\ospg$ transitive closure bound, $O(\out \cdot\sqrt{|E|})$,
    is never larger than the bound of $O(\out^{3/2})$ from~\cite{10.1145/2745754.2745779},
    and is strictly smaller whenever $\out > |E|$.
    Moreover, our $\ospg$ bound is strictly smaller than the bound of $O(\out^{1.3459})$ from~\cite{doi:10.1137/1.9781611977912.167} whenever $\out > |E|^{1.4455}$.
}

\section{Conclusion}
\label{sec:conclusion}

In this paper, we introduced a new output-sensitive algorithm for the evaluation of regular path queries over graphs. We showed that the data complexity of our algorithm is upper bounded by that of the product graph algorithm, which is the best-known prior algorithm and the workhorse of practical implementations in graph database management systems, and even asymptotically improves on the latter for sparse data graphs and selective queries.

One line of future work is to extend the investigation of output-sensitive algorithms to the evaluation of conjunctive regular path queries, which use regular expressions to connect variables in the query. Furthermore, there is a notable lack of output-sensitive complexity lower bounds. This requires  output-sensitive refinements of existing conjectures, in the spirit of the sparse Boolean matrix multiplication conjecture mentioned in footnote (5) in the introduction. 

We would also like to understand whether the asymptotic complexity gap between our output-sensitive algorithm and the classical product graph approach also translates into a runtime performance gap for practical workloads. 

\begin{acks}
    This work was partially supported by NSF-BSF 2109922, NSF-IIS 2314527, NSF-SHF 2312195,
    and SNSF 200021-231956.
\end{acks}

\bibliographystyle{ACM-Reference-Format}
\bibliography{bibliography}


\appendix
\section{Transitive Closure: Linear versus Binary Formulation}
\label{app:binary-tc}
In the proof of Theorem~\ref{thm:transitive-closure}, we analyzed the overall runtime needed for semi-na\"ive evaluation of the {\em linear}
formulation of transitive closure given by Eq.~\eqref{eq:linear:tc}.
By {\em linear} here, we mean that the recursive rule given by Eq.~\eqref{eq:linear:tc} has a single
recursive atom $T(X, Z)$ in its body.
In contrast, the transitive closure can also be defined using a {\em binary} formulation, where the recursive rule has two recursive atoms in its body:
\begin{align}
    T(X, Y)&= E(X, Y)\nonumber\\
    T(X, Y)&= T(X, Z) \land T(Z, Y)\label{eq:binary:tc}
\end{align}
It is well-known that given a graph $G = (V, E)$ of diameter $d$, the linear formulation of transitive closure
takes $O(d)$ iterations to converge, while the binary formulation takes $O(\log d)$ iterations.
It might be tempting to conclude that the binary formulation is faster overall than the linear formulation,
but we show below that this is not the case. In particular, while we already showed
in the proof of Theorem~\ref{thm:transitive-closure} that the linear formulation
takes time $O(\min(\out \cdot\sqrt{|E|}, |V|\cdot |E|))$, we show below that the binary formulation takes time $O(\out^{3/2})$. Since the transitive closure of a graph $G$ contains $G$, we have $\out \geq |E|$
and $\out$ could be as large as $|E|^2$.

In order to show that the binary formulation takes time $O(\out^{3/2})$, we need to analyze the runtime of the semi-na\"ive evaluation algorithm for the binary formulation, given by Eq.~\eqref{eq:binary:tc},
similar to what we did in the proof of Theorem~\ref{thm:transitive-closure}.
Semi-na\"ive evaluation of the program given by Eq.~\eqref{eq:binary:tc} evaluates
a sequence of relations $\delta T^i, T^i$ for $i = 0, 1, \ldots$ until a fixpoint is reached,
i.e., until we encounter $\delta T^i = \emptyset$:
\begin{align}
    \delta T^0(X, Y) &= E(X, Y)\label{eq:delta:T0:binary}\\
    T^0(X, Y) &= \delta T^0(X, Y)\label{eq:T0:binary}\\
    \text{for $i = 1, 2, \ldots$}\nonumber\\
    \delta T^{i}(X, Y) &= \delta T^{i-1}(X, Z) \land T^{i-1}(Z, Y) \land \neg T^{i-1}(X, Y)
    \label{eq:delta:Ti:binary1}\\
    \delta T^{i}(X, Y) &= T^{i-1}(X, Z) \land \delta T^{i-1}(Z, Y) \land \neg T^{i-1}(X, Y)
    \label{eq:delta:Ti:binary2}\\
    T^{i}(X, Y) &= T^{i-1}(X, Y) \lor \delta T^{i}(X, Y)\label{eq:Ti:binary}
\end{align}
\change{Let $I$ be the last iteration, i.e.~the smallest $i$ such that $\delta T^i = \emptyset$.}
We bound the total time needed to evaluate the rule in Eq.~\eqref{eq:delta:Ti:binary1} over all
iterations. Rule~\eqref{eq:delta:Ti:binary2} is similar.
Similar to the proof of Theorem~\ref{thm:transitive-closure}, the relations $\delta T^i$
form a partition of the final $T$, thanks to the negation
$\neg T^{i-1}(X, Y)$ in Eq.~\eqref{eq:delta:Ti:binary1} and~\eqref{eq:delta:Ti:binary2}. Hence, the total time needed to evaluate Eq.~\eqref{eq:delta:Ti:binary1} over all iterations is upper bounded by:
\begin{align}
    \sum_{i\in[I]} \sum_{(x, z)\in \delta T^{i-1}}|\sigma_{Z=z}T(Z, Y)|&\change{=}
    \sum_{(x, z)\in T}
    |\sigma_{Z=z}T(Z, Y)|\nonumber\\
    &\change{=\sum_{(x, z)\in T}
    |\sigma_{Z=z}T(Z, Y)|^{1/2} \cdot
    |\sigma_{Z=z}T(Z, Y)|^{1/2}}\nonumber\\
    &\leq\sum_{(x, z)\in T}
    |\sigma_{Z=z}T(Z, Y)|^{1/2} \cdot
    |\sigma_{X=x}T(X, Y)|^{1/2}\label{eq:tc:binary:def}\\
    &=\sum_{\change{(x, z)\in V^2}}
    |\sigma_{X=x, Z=z}T(X, Z)|^{1/2}\cdot
    |\sigma_{Z=z}T(Z, Y)|^{1/2} \cdot
    |\sigma_{X=x}T(X, Y)|^{1/2}\nonumber\\
    &\leq|T(X, Z)|^{1/2}\cdot |T(Z, Y)|^{1/2} \cdot |T(X, Y)|^{1/2}\nonumber\\
    &=\out^{3/2}\nonumber
\end{align}
Inequality~\eqref{eq:tc:binary:def} holds because by definition of transitive closure,
for every tuple $(x, z, y)$ where $T(x, z)$ and $T(z, y)$ are true, $T(x, y)$ must be true as well.
This proves that the total time needed to evaluate the rule in Eq.~\eqref{eq:delta:Ti:binary1}, and by symmetry Eq.~\eqref{eq:delta:Ti:binary2}, over all iterations is $O(\out^{3/2})$.
Rule~\eqref{eq:delta:T0:binary} takes time $O(|E|) = O(\out)$, and rules~\eqref{eq:T0:binary} and~\eqref{eq:Ti:binary} are dominated by other rules. This proves that the overall time for semi-na\"ive
evaluation of the binary formulation of transitive closure is $O(\out^{3/2})$.

\end{document}